\newtheorem{thm}{Theorem}
\newtheorem{theorem}[thm]{Theorem}
\newtheorem{corollary}[thm]{Corollary}
\newtheorem{lemma}[thm]{Lemma}
\newtheorem{proposition}[thm]{Proposition}
\newtheorem{definition}[thm]{Definition}
\newcommand{\cA}{\mathcal{A}}
\newcommand{\cL}{\mathcal{L}}
\newcommand{\cM}{\mathcal{M}}
\newcommand{\cO}{\mathcal{O}}
\newcommand{\cX}{\mathcal{X}}
\newcommand{\myComment}[1]{}
\newcommand{\Model}{(X, \tau, \Phi, V)}
\newcommand{\br}[1]{[\![ #1]\!]}
\newcommand{\ds}[1]{d #1}
\newcommand{\po}[1]{<^S_d #1}
\newcommand{\intr}[1]{\mathit{int} #1 }
\newcommand{\Intr}[1]{\mathit{Int} #1 }
\newcommand{\intrh}[1]{\widehat{\intr{ #1}} }
\newcommand{\ayycolor}[1]{{\color{black}{#1}}}
\newcommand{\aycolor}[1]{{\color{black}{#1}}}				
\newcommand{\scomm}[1]{{\color{black}{#1}}}
\newcommand{\sse}{\subseteq}
\newcommand\pto{\rightharpoonup}
\newcommand{\aybuke}[1]{}
\renewcommand{\phi}{\varphi}
\newcommand{\weg}[1]{}
\newcommand{\II}[1]{\lbr #1 \rbr} 
\newcommand{\lbr}{[\![}
\newcommand{\rbr}{]\!]}
\newcommand{\imp}{\rightarrow}
\newcommand{\Imp}{\Rightarrow}
\newcommand{\et}{\wedge}
\newcommand{\vel}{\vee}
\newcommand{\Dia}{\Diamond}
\newcommand{\M}{\hat{K}}
\renewcommand{\phi}{\varphi}
\newcommand{\inter}{\cap}
\newcommand{\Boxx}{\Box}  
\newcommand{\Sub}{\mathit{Sub}}
 \def\pushright#1{{
    \parfillskip=0pt            
    \widowpenalty=10000         
    \displaywidowpenalty=10000  
    \finalhyphendemerits=0      
    \leavevmode                 
    \unskip                     
    \nobreak                    
    \hfil                       
    \penalty50                  
    \hskip.2em                  
    \null                       
    \hfill                      
    {#1}                        
    \par}}                      
 \def\qed{\pushright{\rule{2mm}{3mm}}\penalty-700 \smallskip}
\newenvironment{proof}[1]
{\noindent{\bf Proof} #1}%
{\qed}
\title{Announcement as effort on topological spaces}
\date{\today}
\author{
Hans van Ditmarsch
\institute{LORIA, CNRS \\ Universit\'e\ de Lorraine\\
       Nancy, France}
	\email{hans.van-ditmarsch@loria.fr}
	\and
Sophia Knight 
	\institute{LORIA, CNRS \\ Universit\'e\ de Lorraine\\
       Nancy, France}
       	\email{sophia.knight@gmail.com}
       \and
Ayb\"{u}ke \"{O}zg\"{u}n
\institute{LORIA, CNRS \\ Universit\'e\ de Lorraine\\
       Nancy, France}
	\email{aybuke.ozgun@loria.fr}
}
\begin{document}

\maketitle

\begin{abstract}
We propose a multi-agent logic of knowledge, public and arbitrary announcements, that is interpreted on topological  spaces in the style of subset space semantics. The arbitrary announcement modality functions similarly to the effort modality in subset space logics,  however, it comes with intuitive and semantic differences. We provide axiomatizations for three logics based on this setting, and demonstrate their completeness.
\end{abstract}


\section{Introduction}\label{introduction}

In \cite{moss92}, Moss et al. introduce a bi-modal logic with language  $$\varphi::= p \ | \ \neg\varphi \ | \ \varphi\wedge\varphi \ | \  K\varphi \ | \ \Boxx\phi,$$ called subset space logic (SSL),  in order to formalize reasoning about sets and points together in one modal system. The main interest in their investigation lies in spatial structures such as topological spaces and using modal logic and the techniques behind for spatial reasoning, however,  they also have a strong motivation from epistemic logic.  While the modality $K$ is interpreted as knowledge, $\Boxx$ intends to capture the notion of \emph{effort}, i.e., any action that results in increase in knowledge. They propose subset space semantics for their logic. \ayycolor{A subset space is defined to be a pair $(X, \cO)$, where $X$ is a non-empty domain and $\cO$ is a collection of subsets of $X$  (not necessarily a topology),  wherein the modalities $K$ and $\Boxx$ are evaluated with respect to pairs of the form  $(x, U)$, where $x\in U\in\cO$.} \weg{\in\cO$ and $\cO$ is a collection of such subsets $U$ \cite{moss92}}According to subset space semantics, given a pair $(x, U)$, the modality $K$ quantifies over the elements of $U$,   whereas $\Boxx$ quantifies over all open subsets of $U$ that include the actual world $x$. Therefore, while knowledge is interpreted `locally' in a given observation set $U$, effort is read as \emph{open-set-shrinking} where more effort corresponds to a smaller neighbourhood, thus, a possible increase in knowledge. The schema $\Dia K\varphi$ states that after some effort the agent comes to know $\varphi$  where effort can be in the form of measurement, observation, computation, approximation \cite{moss92,parikh96,moss2,baskent12}, or announcement \cite{plaza:1989, balbiani08,eumas}.

The epistemic motivation behind the subset space semantics and the dynamic nature of the effort modality suggests a  link between SSL and dynamic epistemic logic, in particular dynamics known as public announcement \cite{baskent,baskent12,HvD-SSL,wang13,bjorndahl}. The works \cite{baskent,baskent12,HvD-SSL}  propose modelling public announcements on subset spaces by deleting the states or the neighbourhoods falsifying the announcement. This dynamic epistemic method is not in the spirit of the effort modality: dynamic epistemic actions result in global model change, whereas the effort modality results in local  neighbourhood shrinking. Hence, it is natural to search for an `open-set-shrinking-like' interpretation of public announcements on subset spaces. To best of our knowledge, Wang and {\AA}gotnes \cite{wang13} were the first to propose semantics for public announcements on subset spaces in the style of the effort modality, although this is not necessarily on topological spaces. Bjorndahl \cite{bjorndahl} then proposed a revised version of the \cite{wang13} semantics. In contrast to the aforementioned proposals, Bjorndahl uses models based on topological spaces to interpret knowledge and information change via public announcements. He considers the language $$\varphi::= p \ | \ \neg\varphi \ | \ \varphi\wedge\varphi \ | \  K\varphi \ | \ \intr{(\varphi)} \ | \ [\phi]\phi,$$ where $\intr{(\varphi)}$ means `$\varphi$ is true and can be announced', and where $[\phi]\psi$ means `after public announcement of $\phi$, $\psi$.'

In \cite{balbiani08}, Balbiani et al.\ introduce a logic to quantify over announcements in the setting of epistemic logic based on the language (with single-agent version here) $$\varphi::= p \ | \ \neg\varphi \ | \ \varphi\wedge\varphi \ | \  K \varphi \ | \ [\phi]\phi \ | \ \Box\varphi.$$  In this case, unlike above, $\Box\varphi$ means `after any announcement, $\varphi$ (is true)' so that $\Box$ quantifies over epistemically definable subsets ($\Box$-free formulas of the language) of a given model. In this case, $\Diamond K \varphi$ again means that the agent comes to know $\phi$, but in the interpretation that there is a formula $\psi$ such that after announcing it the agent knows $\varphi$. What becomes true or known by an agent  after an announcement can be expressed in this language without explicit reference to the announced formula. 

Clearly, the meaning of the effort $\Boxx$ modality and of the arbitrary announcement $\Box$ modality are related in motivation. In both cases, interpreting the modality requires quantification over sets. Subset-space-like semantics provides natural tools for this. In \cite{eumas}, we extended Bjorndahl's proposal \cite{bjorndahl} with an arbitrary announcement modality $$\varphi::= p \ | \ \neg\varphi \ | \ \varphi\wedge\varphi \ | \  K\varphi \ | \ \intr{(\varphi)} \ | \ [\phi]\phi \ | \ \Box\varphi$$ and provided topological semantics for the $\Box$ modality, and proved completeness for the corresponding single-agent logic $APAL_\intr{}$. 

In the current proposal we generalize this approach to a multi-agent setting. Multi-agent subset space logics have been investigated in \cite{heinemann08,heinemann10,baskent,agotnes13}.  There are some challenges with such a logic concerning the evaluation of higher-order knowledge. The general setup is for any finite number of agents, but to demonstrate the challenges, consider the case of two agents. Suppose for each of two agents $i$ and $j$ there is an open set such that the semantic primitive becomes a triple $(x,U_i,U_j)$ instead of a pair $(x,U)$. Now consider a formula like $K_i \M_j K_i p$, for `agent $i$ knows that agent $j$ considers possible that agent $i$ knows proposition $p$'. If this is true for a triple $(x,U_i,U_j)$, then $\M_j K_i p$ must be true for any $y \in U_i$; but $y$ may not be in $U_j$, in which case $(y,U_i,U_j)$ is not well-defined: we cannot interpret $\M_j K_i p$.  Our solution to this dilemma is to consider neighbourhoods that are not only relative to each agent, as usual in multi-agent subset space logics, but that are also {\em relative to each state}. This amounts to, when shifting the viewpoint from $x$ to $y \in U_i$, in $(x,U_i,U_j)$, we simultaneously have to shift the {\em neighbourhood} (and not merely the point in the actual neighbourhood) for the other agent. So we then go from $(x,U_i,U_j)$ to $(y,U_i,V_j)$, where $V_j$ may be different from $U_j$. If they are different, their intersection should be empty.

In order to define the evaluation neighbourhood for each agent with respect to the state in question, we employ a technique inspired by the standard neighbourhood semantics \cite{Chellas}.\weg{: we do not only use a set of neighbourhoods as one of the components of a topological model, as in \cite{moss2, HvD-SSL, bjorndahl,wang13} for the single-agent case, but} We use a set of \emph{neighbourhood functions}, determining the evaluation neighbourhood relative to both the given state and the corresponding agent. These functions need to be partial in order to render the  semantics well-defined for the dynamic modalities in the system.

\medskip

In Section \ref{sec.logic} we define the syntax, structures, and semantics of our multi-agent logic of arbitrary public announcements, $APAL_\intr{}$, interpreted on topological spaces equipped with a set of neighbourhood functions.  
 Without arbitrary announcements we get the logic $PAL_{int}$, and with neither arbitrary nor public announcements, the logic $EL_{int}$. In this section we also show some typical validities of the logic, and give a detailed example. In Section \ref{sec.axiomatization} we give axiomatizations for the logics: \ayycolor{$PAL_{int}$ extends $EL_{int}$ and $APAL_\intr{}$ extends $PAL_{int}$.} In Section \ref{sec.completeness} we demonstrate completeness for these logics. The completeness proof for the epistemic version of the logic, $EL_{int}$, is rather different from the completeness proof for the full logic $APAL_{int}$. We then compare our work to that of others (Section \ref{sec.comparison}) and conclude.

\section{The logic $APAL_\intr{}$} \label{sec.logic}

We define the syntax, structures, and semantics of our logic. From now on, $\mathit{Prop}$ is a countable set of propositional variables and $\cA$ a finite and non-empty set of agents.

\subsection{Syntax}

\begin{definition} 
The language $\mathcal{L}_{APAL_\intr{}}$ is defined by $$\varphi::= p \ | \ \neg\varphi \ | \ \varphi\wedge\varphi \ | \  K_i\varphi \ | \ \intr{(\varphi)} \ | \ [\varphi]\varphi \ | \ \Box\varphi$$  where $p\in \mathit{Prop}$ and $i\in \cA$.  Abbreviations for the connectives $\vee$, $\rightarrow$ and $\leftrightarrow$ are standard, and   $\bot$ is defined as abbreviation by $p\wedge\neg p$.  We employ $\M_i$ for $\neg K_i \neg \varphi$, and $\Diamond\varphi$ for $\neg\Box \neg\varphi$. We denote the non-modal part of  $\mathcal{L}_{APAL_\intr{}}$  (without the modalities $K_i$, $\intr{}$, $[\varphi]$ and $\Box$) by $\cL_{Pl}$, the part without $\Box$ by $\cL_{PAL_\intr{}}$, and the part without $\Box$ and $[\phi]$ by $\cL_{EL_\intr{}}$.
\end{definition}
Necessity forms \cite{goldblatt:1982} allow us to select unique occurrences of a subformula in a given formula (unlike in uniform substitution). They will be used in the axiomatization (Section \ref{sec.axiomatization}).
\begin{definition} 
Let $\varphi\in\cL_{APAL_\intr{}}$. The {\em necessity forms}  are inductively  defined as 
$$\xi(\sharp):= \sharp \ | \ \varphi\rightarrow \xi(\sharp) \ | \ K_i\xi(\sharp) \ | \ \intr{(\xi(\sharp))} \ | \ [\varphi]\xi(\sharp).$$
\end{definition}
It is not hard to see that each necessity form $\xi(\sharp)$ has a unique occurrence of $\sharp$.  Given a necessity form $\xi(\sharp)$ and a formula $\varphi\in\cL_{APAL_\intr{}}$, the formula obtained by replacing $\sharp$ by $\varphi$ is denoted by $\xi(\varphi)$.
\myComment{In the completeness proof (Section \ref{sec.completeness}) we use a complexity measure on formulas based on a notion $S$ called the formula's {\em size}. This  is a weighted count of the number of symbols of a formula. Another measure used is {\em $\Box$-depth}. This counts the number of the $\Box$-modalities occurring in a formula. The measure was introduced in \cite{HvD-simpleapal}. We let $\Sub(\varphi)$ denote the set of subformulas of a given formula $\varphi$.}

In the completeness proof (Section \ref{sec.completeness}) we use a complexity measure on formulas based on the \emph{size} and  {\em $\Box$-depth} of formulas where the size of a formula  is a weighted count of the number of symbols  and  {\em $\Box$-depth} counts the number of the $\Box$-modalities occurring in a formula. The measure was first introduced in \cite{HvD-simpleapal}.

\begin{definition}
\label{size}
The size $S(\varphi)$ of a formula $\varphi\in \cL_{APAL_\intr{}}$ is defined as: $S(p)=1$, $S(\neg\varphi)=S(\varphi)+1$, $S(\varphi\wedge\psi)=S(\varphi)+S(\psi)$, $S(K_i\varphi)=S(\varphi)+1$, $S(\intr{(\varphi)})=S(\varphi)+1$, $S([\varphi]\psi)=S(\varphi)+4S(\psi)$, and $S(\Box\varphi)=S(\varphi)+1$.
\end{definition}
The factor $4$ in the clause for $[\varphi]\psi$ is to ensure Lemma \ref{lemma2}. \ayycolor{Although the choice of the number $4$ might seem arbitrary, it is the smallest natural number guaranteeing the desired result (see the proof of Lemma \ref{lemma2})}.  
\begin{definition}
\label{depth}
The $\Box$-depth of a formula $\varphi\in \cL_{APAL_\intr{}}$, denoted by $d(\varphi)$, is defined as: $\ds(p)=0$,  $\ds(\neg\varphi)=\ds(\varphi)$, $\ds(\varphi\wedge\psi)=max \{\ds(\varphi), \ds(\psi)\}$, $\ds(K_i\varphi)=\ds(\varphi)$, $\ds(\intr{(\varphi)})=\ds(\varphi)$, $\ds([\varphi]\psi)=max \{\ds(\varphi), \ds(\psi)\}$, and $\ds(\Box\varphi)=\ds(\varphi)+1$.
\end{definition}
We now define three order relations on $\mathcal{L}_{APAL_\intr{}}$ based on the size and $\Box$-depth of the formulas.
\begin{definition}
\label{orders}
 For any $\varphi, \psi\in\cL_{APAL_\intr{}}$, 
\begin{itemize}  
\item $\varphi <^S \psi$ iff $S(\varphi)<S(\psi)$
\item $\varphi  <_d \psi$ iff $\ds(\varphi)<\ds(\psi)$
\item $\varphi\po\psi$ iff (either $\ds(\varphi)<\ds(\psi)$, or  $\ds(\varphi)=\ds(\psi)$  and $S(\varphi)< S(\psi)$)
\end{itemize}
\end{definition}
We let $\Sub(\varphi)$ denote the set of subformulas of a given formula $\varphi$.
\begin{lemma} \label{lemma1} \ayycolor{For any $\varphi, \psi \in \cL_{APAL_\intr{}}$,}
\begin{enumerate}  
\item $<^S,  <_d, \ <^S_d$ are well-founded strict partial orders between formulas in $\cL_{APAL_\intr{}}$,
\item $\varphi\in \Sub(\psi)$ implies $\varphi\po\psi$  \label{lemma1.2},
\ayycolor{\item $\intr{(\varphi)}\po [\varphi]\psi$, \label{lemma1.5}}
\item $\varphi\in\cL_{PAL_\intr{}}$ iff $\ds(\varphi)=0$, \label{lemma1.3} 
\item $\varphi\in\cL_{PAL_\intr{}}$ implies $[\varphi]\psi\po\Box \psi$. \label{lemma1.4}
\end{enumerate}
\end{lemma}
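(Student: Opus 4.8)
The plan is to dispatch the five items essentially independently, each reducing to a short structural induction or a one-line arithmetic comparison of the two measures $S$ and $\ds{}$; the only real dependency is that item (5) invokes item (4). The unifying observation for item (1) is a general pullback principle: if $f$ maps $\cL_{APAL_\intr{}}$ into a well-founded strict partial order $(W,\prec)$, then $\varphi\mathrel{R}\psi\iff f(\varphi)\prec f(\psi)$ is again a well-founded strict partial order, since irreflexivity and transitivity are inherited pointwise from $\prec$, and any infinite $R$-descending sequence $\varphi_0,\varphi_1,\dots$ with $\varphi_{n+1}\mathrel{R}\varphi_n$ projects to an infinite $\prec$-descending sequence $f(\varphi_0)\succ f(\varphi_1)\succ\cdots$ in $W$, which is impossible. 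I then apply this with $f=S$ and with $f=\ds{}$ into $(\mathbb{N},<)$ to get $<^S$ and $<_d$, and with $f(\varphi)=(\ds(\varphi),S(\varphi))$ into $\mathbb{N}\times\mathbb{N}$ ordered lexicographically to get $\po{}$; the only extra remark needed is that the lexicographic order on $\mathbb{N}\times\mathbb{N}$ is well-founded (it has order type $\omega^2$).

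For items (2) and (4) I would run a structural induction establishing the relevant monotonicity of the measures. For item (2) the induction shows that every proper subformula $\varphi$ of $\psi$ satisfies both $S(\varphi)<S(\psi)$ and $\ds(\varphi)\le\ds(\psi)$: in the inductive step one checks clause by clause in Definitions \ref{size} and \ref{depth} that passing to an immediate subformula strictly decreases the size (each constructor adds a strictly positive amount, e.g.\ $S([\varphi]\psi)=S(\varphi)+4S(\psi)$ with $S(\psi)\ge 1$) while never increasing the $\Box$-depth (every clause except the one for $\Box$ takes the maximum of the immediate subformulas' depths). Then $\ds(\varphi)\le\ds(\psi)$ together with $S(\varphi)<S(\psi)$ gives $\varphi\po\psi$ by the definition of $\po{}$, whether the depths are equal or the depth of $\varphi$ is strictly smaller. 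Here I read $\Sub(\psi)$ as the set of \emph{proper} subformulas (equivalently, I assume $\varphi\ne\psi$), as the irreflexivity of $\po{}$ forces. For item (4) the same kind of induction identifies $\ds(\varphi)$ with the maximal nesting depth of $\Box$ in $\varphi$, so $\ds(\varphi)=0$ holds exactly when $\Box$ does not occur in $\varphi$, i.e.\ when $\varphi\in\cL_{PAL_\intr{}}$.

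Items (3) and (5) are then immediate computations. For item (3) I compute $\ds(\intr{(\varphi)})=\ds(\varphi)\le\max\{\ds(\varphi),\ds(\psi)\}=\ds([\varphi]\psi)$ and $S(\intr{(\varphi)})=S(\varphi)+1<S(\varphi)+4S(\psi)=S([\varphi]\psi)$, using $S(\psi)\ge 1$; if the two depths differ the first inequality already delivers $\intr{(\varphi)}\po[\varphi]\psi$, and if they coincide the strict size inequality does. For item (5), item (4) turns $\varphi\in\cL_{PAL_\intr{}}$ into $\ds(\varphi)=0$, whence $\ds([\varphi]\psi)=\max\{0,\ds(\psi)\}=\ds(\psi)<\ds(\psi)+1=\ds(\Box\psi)$, so the first disjunct in the definition of $\po{}$ yields $[\varphi]\psi\po\Box\psi$ with no size comparison required.

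I do not expect a genuine obstacle, since everything is elementary bookkeeping on two complexity measures; the care points are merely to state the well-foundedness of the lexicographic order correctly in item (1), to fix the reading of $\Sub$ as proper subformulas in item (2) so the conclusion is consistent with irreflexivity, and to notice that item (5) is driven \emph{purely} by the $\Box$-depth---the single outer $\Box$ contributes the decisive $+1$---which is exactly why item (4), ensuring the announced formula $\varphi$ adds no $\Box$-depth, is the one ingredient that is needed.
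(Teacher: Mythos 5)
Your proof is correct. The paper states Lemma \ref{lemma1} without proof, treating it as routine bookkeeping, and your argument is precisely the intended verification: pulling back the well-founded orders on $\mathbb{N}$ and on $\mathbb{N}\times\mathbb{N}$ (lexicographic) along $S$, $d$, and $(d,S)$ for item (1), the monotonicity induction ($S$ strictly decreases, $d$ never increases, on passing to immediate subformulas) for item (2), the one-line computations $S(\varphi)+1 < S(\varphi)+4S(\psi)$ and $d([\varphi]\psi)=d(\psi)<d(\Box\psi)$ for items (3) and (5), and the observation that $d(\varphi)=0$ exactly when $\Box$ does not occur for item (4) --- matching the depth computations the paper does spell out in its proof of Lemma \ref{lemma2}. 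Your reading of $\Sub(\psi)$ as \emph{proper} subformulas in item (2) is the right repair of a slight imprecision in the paper, since under the usual reflexive convention the claim would contradict the irreflexivity of $<^S_d$ asserted in item (1), and the paper only ever uses item (2) for proper subformulas in the Truth Lemma.
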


\begin{lemma}\label{lemma2}
For any $\varphi, \psi, \chi\in \cL_{APAL_\intr{}}$ and $i\in\cA$,
\begin{enumerate}  
\item $\neg [\varphi]\psi <^S_d [\varphi]\neg\psi$, \label{lemma2.1}
\ayycolor{\item $\intr{([\varphi]\psi)} <^S_d [\varphi]\intr{(\psi)}$, \label{lemma2.4}}
\item $K_i [\varphi]\psi <^S_d [\varphi] K_i\psi$,\label{lemma2.2}
\item $[\neg [\varphi]\neg\intr{(\psi)}]\chi <^S_d [\varphi][\psi]\chi$. \label{lemma2.3}
\end{enumerate}
\end{lemma}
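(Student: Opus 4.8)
The plan is to dispatch all four items by the same two-stage recipe dictated by Definition \ref{orders}: first compute the $\Box$-depth $d$ of both sides and observe that they coincide, and then, since equal depth reduces $<^S_d$ to the pure size order, compute the size $S$ of both sides and verify the strict inequality. The one preliminary fact I would record is that $S(\chi)\geq 1$ for every formula $\chi$, which follows by an immediate induction on structure from Definition \ref{size} (each clause preserves being $\geq 1$); this is exactly the fact that makes the factor $4$ pay off in item \ref{lemma2.3}.

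For the depth bookkeeping, note that each rewrite in items \ref{lemma2.1}--\ref{lemma2.2} merely moves a $\neg$, an $\intr{}$, or a $K_i$ across an announcement $[\varphi]$, and by Definition \ref{depth} none of these three operators alters $\Box$-depth. Hence in each of those items both sides have depth $\max\{d(\varphi),d(\psi)\}$, so the depths agree. For item \ref{lemma2.3} the same computation, now carrying an extra $\max$ over $d(\chi)$, gives $d(\neg[\varphi]\neg\intr{(\psi)})=\max\{d(\varphi),d(\psi)\}$ and hence depth $\max\{d(\varphi),d(\psi),d(\chi)\}$ on both sides. Thus in all four items the comparison collapses to a size inequality, and I never have to invoke the first clause of $<^S_d$.

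The size computations are then routine. In items \ref{lemma2.1}, \ref{lemma2.4}, and \ref{lemma2.2} each left-hand side evaluates to $S(\varphi)+4S(\psi)+1$ and each right-hand side to $S(\varphi)+4S(\psi)+4$: the displaced operator contributes only $+1$ while inside the scope of $[\varphi]$ on the left, but becomes $+4$ once pushed outside on the right, via the clause $S([\varphi]\psi)=S(\varphi)+4S(\psi)$. So the single inequality $1<4$ settles all three uniformly. Item \ref{lemma2.3} is the step I expect to be the crux, and the only place where the constant $4$ is genuinely needed: expanding the sizes gives $S([\neg[\varphi]\neg\intr{(\psi)}]\chi)=S(\varphi)+4S(\psi)+9+4S(\chi)$ against $S([\varphi][\psi]\chi)=S(\varphi)+4S(\psi)+16S(\chi)$, so the claim reduces to $9+4S(\chi)<16S(\chi)$, i.e.\ $9<12S(\chi)$, which holds precisely because $S(\chi)\geq 1$. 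With an arbitrary factor $c$ in place of $4$ this last inequality would read $3c+1<c^2$ at $S(\chi)=1$, i.e.\ $c^2-3c-1>0$, forcing $c\geq 4$; this is exactly the tightness remark following Definition \ref{size}.
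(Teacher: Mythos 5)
Your proof is correct and follows essentially the same route as the paper's: you show the $\Box$-depths of the two sides coincide so that $<^S_d$ reduces to a size comparison, and for item 4 your computation $S([\neg[\varphi]\neg\intr{(\psi)}]\chi)=S(\varphi)+4S(\psi)+4S(\chi)+9$ versus $S([\varphi][\psi]\chi)=S(\varphi)+4S(\psi)+16S(\chi)$, settled via $S(\chi)\geq 1$, matches the paper exactly (the paper bounds $4S(\chi)+9\leq 13S(\chi)<16S(\chi)$ where you write $9<12S(\chi)$, a purely cosmetic difference). Your additions---writing out items 1--3 via the uniform $1<4$ observation and the $c^2-3c-1>0$ analysis showing $4$ is the least workable factor---go slightly beyond the paper, which proves only item 4 and notes the other items are similar, but they are correct and in the same spirit as the paper's own tightness remark.
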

\begin{proof}{} 
We only prove Lemma \ref{lemma2}.\ref{lemma2.3}. \ayycolor{The proof demonstrates why in the $[\varphi]\psi$ clause 
of Definition \ref{size}, 4 is the smallest natural number guaranteeing the result.} \aybuke{Is the previous sentence ok? I did not change it but it sounds a bit weird to me?}


By Definition  \ref{size}, we have  that $
S([\neg [\varphi]\neg\intr{(\psi)}]\chi )= S(\varphi)+ 4S(\psi)+4S(\chi)+9$ and that $S([\varphi][\psi]\chi)=S(\varphi)+ 4S(\psi)+16S(\chi)$. As for any $\chi\in\mathcal{L}_{APAL_\intr{}}$, $1 \leq S(\chi)$, it follows that $4S(\chi)+9 \leq 4S(\chi)+9S(\chi) = 13S(\chi) < 16S(\chi)$. Further, we observe that $d([\neg [\varphi]\neg\intr{(\psi)}]\chi) = \max \{d(\varphi), d(\psi), d(\chi)\} = d([\varphi][\psi]\chi)$. (This is similar in the first three items.) 
\end{proof}



\myComment{The following corollary states the inequalities used in Lemma \ref{lemma11}:
\begin{corollary}\label{complexity}
The following  hold:
\begin{enumerate}  
\end{enumerate}
\end{corollary}
}

\subsection{Background}
In this section, we introduce the topological concepts that will be used throughout this paper.  All the concepts in this section can be found in \cite{dugundji}.

\begin{definition}
A {\em topological space} $(X, \tau)$  is a pair  consisting of a non-empty set $X$ and a family $\tau$ of subsets of $X$ satisfying $\emptyset\in\tau$ and $X\in\tau$, and closed under finite intersections and arbitrary unions.
\end{definition}

The set $X$ is called the \emph{space}.\myComment{, the elements of $X$ are called \emph{points} of the space.} The subsets of $X$ belonging to $\tau$ are called \emph{open sets} (or \emph{opens}) in the space;  the family $\tau$ of open subsets of $X$ is also called a \emph{topology} on $X$. If for some $x\in X$ and an open $U\subseteq X$ we have $x\in U$, we say that $U$ is an \emph{open neighborhood} of $x$. 

A point $x$ is called an \emph{interior point} of a set $A\subseteq X$ if there is an open neighborhood $U$ of $x$ such that $U\subseteq A$. The set of all interior points of $A$ is called the \emph{interior} of $A$ and denoted by  $\Intr{(A)}$. We can then easily observe that for any $A\subseteq X$, $\Intr{(A)}$  is the largest open subset of $A$. 

\begin{definition}
A family $B\subseteq \tau$ is called a {\em base} for a topological space $(X, \tau)$ if every non-empty open subset of $X$ can be  written as a  union of elements of $B$.  
\end{definition}

\aycolor{Given any family $\Sigma=\{A_\alpha \ | \ \alpha\in I\}$ of subsets of $X$, there exists a unique, smallest topology $\tau(\Sigma)$ with $\Sigma\subseteq \tau(\Sigma)$ \cite[Th.\ 3.1]{dugundji}. The family $\tau(\Sigma)$ consists of $\emptyset$, $X$, all finite intersections of the  $A_\alpha$, and all arbitrary unions of these finite intersections. $\Sigma$ is called a \emph{subbase} for $\tau(\Sigma)$, and  $\tau(\Sigma)$ is said to be generated by $\Sigma$. The set of  finite intersections of members of $\Sigma$ forms a  base for $\tau(\Sigma)$.} 

\subsection{Structures}

In this section we define our multi-agent models based on topological spaces.

\ayycolor{\begin{definition}
\label{function}
Given a topological space $(X, \tau)$, a {\em neighbourhood function set} $\Phi$ on $(X, \tau)$ is a set of partial functions $\theta: X\pto\cA\imp\tau$ such that for all $x,y\in Dom(\theta)$, for all $i\in\cA$, and for all $U\in\tau$:
\begin{enumerate}  
\scomm{
\item \label{cond.1} $\theta(x)(i)\in\tau$,
\item \label{cond.2} $x\in \theta (x)(i)$,
\item \label{cond.4} $\theta(x)(i)\sse Dom(\theta)$,
\item \label{cond.5} if $y\in \theta (x)(i)$ then $\theta (x)(i)=\theta(y)(i)$,
\item \label{cond.6}$\theta|_U\in\Phi$, 
}
\end{enumerate}
where  $\theta|_U$ is the partial function with \mbox{$Dom(\theta|_U)=Dom(\theta)\cap U$} and $\theta|_U(x)(i)= \theta(x)(i)\cap U$. We call the elements of $\Phi$  {\em neighbourhood functions}.
\end{definition}}

\begin{definition}
\label{model}
A \emph{topological model with functions} (or in short, a \emph{topo-model}) is a tuple $\cM=\Model$, where $(X, \tau)$ is a topological space, $\Phi$ a neighbourhood function set, and \mbox{$V: Prop \imp X$} a valuation function. \ayycolor{We refer to the part ${\mathcal X}=(X, \tau, \Phi)$ without the valuation function as a {\em topo-frame}.}
\end{definition}

A pair $(x, \theta)$ is a {\em neighbourhood situation} if $x\in Dom(\theta)$ and $\theta(x)(i)$ is called the {\em epistemic neighbourhood at $x$ of agent $i$}. If $(x,\theta)$ is a neighbourhood situation in $\cM$ we write $(x,\theta)\in\cM$. Similarly, if $(x,\theta)$ is a neighbourhood situation in $\cX$ we write $(x,\theta)\in\cX$.

\begin{lemma}\label{dom-open}
For any $(X, \tau, \Phi)$ and $\theta\in\Phi$, $Dom(\theta)\in \tau$.
\end{lemma}
\weg{
\begin{proof}{}
Let $(X, \tau, \Phi)$ be a multi-agent subset space and $\theta\in \Phi$. We want to show that $Dom(\theta)\subseteq\Intr{Dom(\theta)}$. Let  $x\in Dom(\theta)$. Then, by Definition \ref{model}.\ref{cond.1},  \ref{model}.\ref{cond.2} and  \ref{model}.\ref{cond.4}, we have $x\in\theta(x)(i)\in\tau$ and $\theta(x)(i)\subseteq Dom(\theta)$. \myComment{ $\theta(x)(i)$ is an open neighbourhood of $x$ such that $\theta(x)(i)\subseteq Dom(\theta)$.}Therefore, by definition of the operator $\Intr$, $x\in \Intr{Dom(\theta)}$. Hence, $Dom(\theta)=\Intr{Dom(\theta)}$, i.e., $Dom(\theta)\in\tau$.
\end{proof}
}


\subsection{Semantics}\label{semantics}

\begin{definition}
Given a topo-model  \mbox{$\cM=\Model$}  and a neighbourhood situation $(x, \theta) \in \cM$, the semantics for the language $\mathcal{L}_{APAL_\intr{}}$ is defined recursively as: 
\[
\begin{array}{llll}
\mathcal{M}, (x, \theta) \models p &  \mbox{iff}& x\in V(p)\\
\mathcal{M}, (x, \theta) \models \neg\varphi & \mbox{iff}& \mbox{not} \ \mathcal{M}, (x, \theta)\models \varphi\\
\mathcal{M}, (x, \theta)\models \varphi\wedge\psi & \mbox{iff}& \mathcal{M}, (x, \theta)\models \varphi \ \mbox{and} \ \mathcal{M}, (x, \theta)\models \psi\\
\mathcal{M}, (x, \theta)\models K_i\varphi & \mbox{iff}& (\forall y \in \theta(x)(i))(\mathcal{M}, (y, \theta)\models \varphi)\\
\mathcal{M}, (x, \theta)\models \intr{(\varphi)} & \mbox{iff}& x\in\Intr{\br{\varphi}^{\theta}}\\
\cM,(x,\theta)\models[\varphi]\psi  & \mbox{iff} & \cM,(x,\theta)\models \intr(\varphi) \Imp\cM,(x,\theta^\varphi)\models\psi\\
\mathcal{M}, (x, \theta)\models \Box\varphi  & \mbox{iff}& (\forall \psi\in\cL_{PAL_\intr{}})(\mathcal{M}, (x, \theta)\models [\psi]\varphi)\\
\end{array}\] where $p\in\mathit{Prop}$, $\br{\varphi}^{\theta} =\{ y\in Dom(\theta) \ | \ \mathcal{M}, (y, \theta)\models\varphi\}$ and  \myComment{$\theta^\varphi:\Intr{\br{\varphi}^\theta}\imp \cA\imp\tau$} $\theta^\varphi:X\pto \cA\imp\tau$ such that $Dom(\theta^\varphi)= \Intr\br{\varphi}^\theta$ and  $\theta^\varphi(x)(i)=\theta(x)(i)\cap \Intr{\br{\varphi}^\theta}$.  
\end{definition}
The \emph{updated neighbourhood function} $\theta^\varphi$ is the restriction of $\theta$ to the open set $\Intr\br{\varphi}^\theta$, i.e., for all $x\in X$, $\theta^\varphi(x)(i)=\theta|_{\Intr\br{\varphi}^\theta}(x)(i)$. 

A formula $\varphi\in\mathcal{L}_{APAL_\intr{}}$ is \emph{valid in a topo-model} $\cM$, denoted $\cM\models\varphi$, iff $\cM, (x, \theta)\models\varphi$ for all $(x, \theta)\in \cM$; $\varphi$ is \emph{valid}, denoted $\models\varphi$, iff for all topo-models $\cM$ we have $\cM\models\varphi$. \ayycolor{Soundness and completeness with respect to topo-models are defined as usual.} 

Let us now elaborate on the structure of topo-models and the above semantics we have proposed for $\cL_{APAL_\intr{}}$. Given a topo-model $\Model$, the epistemic neighbourhoods of each agent at a given state $x$ are determined by  (partial) functions $\theta: X\pto\cA\imp\tau$ assigning an open neighbourhood to the state in question for each agent. We allow for partial functions in $\Phi$, and close $\Phi$  under taking restricted functions $\theta|_U$ where $U\in\tau$  (see Definition \ref{function}, condition \ref{cond.6}), so that updated neighbourhood functions are guaranteed to be well-defined elements of $\Phi$. As in the standard subset space semantics, by picking a neighbourhood situation $(x, \theta)$, we first  localize our focus to an \emph{open} subdomain, in fact to $Dom(\theta)$,  including the state $x$ and the epistemic neighbourhood  of each agent at $x$ determined by $\theta$. Then the function $\theta(x)$ designates an epistemic neighbourhood  for each agent $i$ in $\cA$.  It is guaranteed that every agent $i$ is assigned a neighbourhood by $\theta$ at every state $x$ in $Dom(\theta)$, since each $\theta(x)$ is defined to be a  \emph{total} function from $\cA$ to $\tau$. Moreover,  condition \ref{cond.2} of Definition \ref{function}  ensures that $\emptyset$ cannot be an epistemic neighbourhood, i.e., $\theta(x)(i)\not =\emptyset$ for all $x\in Dom(\theta)$. \ayycolor{Finally,  conditions \ref{cond.2} and \ref{cond.5} of Definition \ref{function} make sure that the $S5$ axioms for each $K_i$ are sound with respect to all topo-models.} 



\weg{
An important challenge in modelling multi-agent epistemic logics on subset space semantics is concerned with higher-order knowledge and keeping the evaluation state within the scope of epistemic neighbourhoods of each agent at every stage of the evaluation. We solve this problem my successively determining the epistemic neighbourhood of the corresponding agents  at every stage of the evaluation, not only with respect to the agent but also relative to the evaluation state. More precisely,  given a higher order epistemic formula  $K_i K_j\varphi$ and a neighbourhood situation $(x, \theta)$, we have
\[
\begin{array}{lll}
(x, \theta)\models K_i K_j\varphi & \mbox{iff} & \forall y\in\theta(x)(i)((y, \theta)\models K_j\varphi)\\
\ &  \mbox{iff} & \forall y\in\theta(x)(i) (\forall z\in\theta(y)(j), (z, \theta)\models \varphi).
\end{array}\]

It is easy to see that for every state $y$ in $\theta(x)(i)$  its neighbourhood is determined with respect to $y$ and the corresponding agent $j$ and thus,  by conditions \ref{cond.2} and \ref{cond.4} of Definition \ref{function},  $(y, \theta)$ is always rendered to be a well-defined neighbourhood situation.  Finally,  condition (\ref{cond.5}) of Definition \ref{function} ensures that the $S5$ axioms for each $K_i$ are sound w.r.t all multi-agent subset space models.
}

\medskip

We now provide some semantic results. As usual in the subset space setting, truth of non-modal formulas only depends on the state in question. 
\begin{proposition}\label{eval.prop}
Give a topo-model $\cM=(X, \tau, \Phi, V)$, neighbourhood situations $(x, \theta_1), (x, \theta_2)\in \cM$, and a formula $\varphi\in \cL_{Pl}$. Then $(x, \theta_1)\models\varphi \mbox{ iff } (x, \theta_2)\models\varphi$.
\end{proposition}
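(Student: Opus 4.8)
The plan is to prove Proposition~\ref{eval.prop} by induction on the structure of $\varphi \in \cL_{Pl}$, the purely propositional (non-modal) fragment of the language. Since $\cL_{Pl}$ contains only propositional variables closed under $\neg$ and $\wedge$, the induction has very few cases, and the whole point is that none of these connectives can ``see'' the neighbourhood function component of a situation.

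First I would set up the base case. For a propositional variable $p \in \mathit{Prop}$, the semantic clause gives $\cM,(x,\theta_1) \models p$ iff $x \in V(p)$, and likewise $\cM,(x,\theta_2) \models p$ iff $x \in V(p)$. Since both situations share the same state $x$ and the valuation $V$ depends only on the state, the two conditions are literally identical, so the equivalence holds. Note that here it is essential that $(x,\theta_1)$ and $(x,\theta_2)$ agree on their first coordinate; the proposition is stated precisely so that only the neighbourhood function varies.

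For the inductive step I would handle the two connectives. For negation, assuming the claim for $\varphi$, we have $(x,\theta_1) \models \neg\varphi$ iff not $(x,\theta_1) \models \varphi$, which by the induction hypothesis holds iff not $(x,\theta_2)\models\varphi$, i.e.\ iff $(x,\theta_2)\models\neg\varphi$. For conjunction, assuming the claim for $\varphi$ and $\psi$, the clause $(x,\theta_1)\models\varphi\wedge\psi$ iff $(x,\theta_1)\models\varphi$ and $(x,\theta_1)\models\psi$ reduces, by applying the induction hypothesis to each conjunct, to $(x,\theta_2)\models\varphi$ and $(x,\theta_2)\models\psi$, which is $(x,\theta_2)\models\varphi\wedge\psi$. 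One should remember that $\bot$ is an abbreviation for $p \wedge \neg p$ and so is covered by the cases already treated, requiring no separate argument.

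This proof is essentially routine, so I do not expect a genuine obstacle. The only conceptual point worth flagging is \emph{why} $\cL_{Pl}$ is the right fragment: the argument breaks down the moment a modality is admitted, because $K_i$ quantifies over $\theta(x)(i)$, and $\intr{(\cdot)}$ refers to $\Intr{\br{\varphi}^\theta}$, both of which depend explicitly on $\theta$; hence differing neighbourhood functions $\theta_1,\theta_2$ could yield different truth values. Thus the restriction to $\cL_{Pl}$ is exactly what makes the induction go through, and the proof amounts to observing that the three non-modal clauses never consult the second coordinate of a neighbourhood situation.
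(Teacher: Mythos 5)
Your proof is correct: the paper in fact states Proposition~\ref{eval.prop} without proof (it is dismissed as the usual observation that truth of non-modal formulas depends only on the state), and your structural induction on $\varphi\in\cL_{Pl}$ is precisely the routine argument being taken for granted. Your closing remark about why the argument fails for $K_i$ and $\intr{(\cdot)}$ --- both clauses consult $\theta$ --- correctly identifies why the restriction to the non-modal fragment is essential.
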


\begin{proposition}\label{int=intint}
Given $\cM=(X, \tau, \Phi, V)$, $\theta \in\Phi$ and $\varphi\in \cL_{APAL_\intr{}}$. Then $\br{\intr{(\varphi)}}^\theta=\Intr{\br{\varphi}^\theta}$. 
\end{proposition}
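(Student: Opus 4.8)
The plan is to prove the set identity by unfolding the two relevant semantic definitions and then discarding a redundant intersection. First I would expand the left-hand side using the definition of the denotation map: by definition $\br{\intr{(\varphi)}}^\theta = \{ y \in Dom(\theta) \mid \cM, (y, \theta) \models \intr{(\varphi)} \}$. Applying the semantic clause for the interior modality, the membership condition $\cM, (y, \theta) \models \intr{(\varphi)}$ is by definition equivalent to $y \in \Intr{\br{\varphi}^\theta}$. Substituting this back in, the left-hand side becomes exactly $Dom(\theta) \cap \Intr{\br{\varphi}^\theta}$.

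It then remains to argue that this intersection with $Dom(\theta)$ is redundant, that is, that $\Intr{\br{\varphi}^\theta} \sse Dom(\theta)$. This follows from two observations chained together: the denotation $\br{\varphi}^\theta$ is by its very definition a subset of $Dom(\theta)$ (every point it contains is explicitly drawn from $Dom(\theta)$), and the interior operator satisfies $\Intr{A} \sse A$ for every $A \sse X$. Hence $\Intr{\br{\varphi}^\theta} \sse \br{\varphi}^\theta \sse Dom(\theta)$, so that $Dom(\theta) \cap \Intr{\br{\varphi}^\theta} = \Intr{\br{\varphi}^\theta}$, which is precisely the right-hand side, completing the argument.

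The one point that requires care — and which I regard as the only genuine conceptual content here — is that the interior $\Intr{\br{\varphi}^\theta}$ is computed in the ambient topology $(X, \tau)$ rather than relative to the subdomain $Dom(\theta)$. A priori this could produce interior points lying outside $Dom(\theta)$, which would break the desired equality; the containment $\Intr{\br{\varphi}^\theta} \sse Dom(\theta)$ established above is exactly what rules this out. Since $Dom(\theta)$ is itself open by Lemma \ref{dom-open}, one could alternatively note that the interior relative to $Dom(\theta)$ and the interior in $X$ agree on subsets of $Dom(\theta)$, but the direct containment argument is cleaner and avoids invoking that lemma. Beyond this subtlety the proof is a routine unfolding of definitions, so I do not anticipate any real obstacle.
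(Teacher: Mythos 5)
Your proof is correct and takes essentially the same route as the paper's: both unfold $\br{\intr{(\varphi)}}^\theta$ via the semantic clause for $\intr{}$ and then drop the intersection with $Dom(\theta)$ by the containment $\Intr{\br{\varphi}^\theta}\sse\br{\varphi}^\theta\sse Dom(\theta)$, which the paper records only as the parenthetical ``since $\Intr{\br{\varphi}^\theta}\subseteq Dom(\theta)$''. Your elaboration of that containment (interior of a set lies inside the set, and $\br{\varphi}^\theta\sse Dom(\theta)$ by definition) merely makes explicit what the paper leaves implicit.
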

\begin{proof}{}
\[
\begin{array}{llll}
\br{\intr{(\varphi)}}^\theta &  = & \{y\in Dom(\theta) \ | \ (y, \theta)\models\intr{(\varphi)}\} \\
 \  &  = & \{y\in Dom(\theta) \ | \ y\in\Intr{\br{\varphi}^\theta}\} \\
 \  &  = & \Intr{\br{\varphi}^\theta} \ \ \mbox{(since $\Intr{\br{\varphi}^\theta}\subseteq Dom(\theta)$)}\\
\end{array}\]\end{proof}

A corollary is that $\Intr{\br{\intr{(\varphi)}}^\theta}=\Intr\Intr{\br{\varphi}^\theta}=\Intr{\br{\varphi}^\theta}$.

\begin{proposition}\label{validities} \
\begin{enumerate}  
\item $\models [\varphi]\psi \leftrightarrow [\intr{(\varphi)}]\psi$\label{validities.1}
\item $\models (\intr{(\varphi)}\wedge \langle\varphi\rangle \intr{(\psi)})\leftrightarrow  \langle\varphi\rangle \intr{(\psi)}$\label{shortcut}
\end{enumerate}
\end{proposition}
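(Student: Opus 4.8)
The plan is to prove each of the two validities by unfolding the semantics directly at an arbitrary neighbourhood situation $(x,\theta)$, since both claims are equivalences that split into two implications. For Proposition \ref{validities}.\ref{validities.1}, the key observation is that the announcement modality $[\varphi]\psi$ depends on $\varphi$ only through two quantities: the precondition $\cM,(x,\theta)\models\intr{(\varphi)}$ and the updated function $\theta^\varphi$, which in turn is determined entirely by the open set $\Intr{\br{\varphi}^\theta}$. So I would show that replacing $\varphi$ by $\intr{(\varphi)}$ changes neither of these. For the precondition, I apply the corollary stated right after Proposition \ref{int=intint}, namely $\Intr{\br{\intr{(\varphi)}}^\theta}=\Intr{\br{\varphi}^\theta}$, which immediately gives $x\in\Intr{\br{\intr{(\varphi)}}^\theta}$ iff $x\in\Intr{\br{\varphi}^\theta}$, i.e.\ the two preconditions coincide. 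For the updated function, the same corollary gives $Dom(\theta^{\intr{(\varphi)}})=\Intr{\br{\intr{(\varphi)}}^\theta}=\Intr{\br{\varphi}^\theta}=Dom(\theta^\varphi)$, and likewise the pointwise values agree, so $\theta^{\intr{(\varphi)}}=\theta^\varphi$. Hence $\cM,(x,\theta)\models[\varphi]\psi$ iff $\cM,(x,\theta)\models[\intr{(\varphi)}]\psi$, and since $(x,\theta)$ was arbitrary the validity follows.

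For Proposition \ref{validities}.\ref{shortcut}, the statement is $\models(\intr{(\varphi)}\wedge\langle\varphi\rangle\intr{(\psi)})\leftrightarrow\langle\varphi\rangle\intr{(\psi)}$, so after noting that $\langle\varphi\rangle\intr{(\psi)}$ abbreviates $\neg[\varphi]\neg\intr{(\psi)}$ it suffices to prove the left-to-right direction is trivial (a conjunction entails its right conjunct) and to prove the right-to-left direction, namely that $\langle\varphi\rangle\intr{(\psi)}$ entails $\intr{(\varphi)}$. Concretely, I would assume $\cM,(x,\theta)\models\langle\varphi\rangle\intr{(\psi)}$ and derive $\cM,(x,\theta)\models\intr{(\varphi)}$. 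Unfolding the diamond, $\cM,(x,\theta)\models\neg[\varphi]\neg\intr{(\psi)}$ means it is not the case that $\cM,(x,\theta)\models\intr{(\varphi)}$ implies $\cM,(x,\theta^\varphi)\models\neg\intr{(\psi)}$; that is, $\cM,(x,\theta)\models\intr{(\varphi)}$ holds and $\cM,(x,\theta^\varphi)\not\models\neg\intr{(\psi)}$. In particular the first conjunct $\cM,(x,\theta)\models\intr{(\varphi)}$ is exactly what we need, so the right conjunct alone forces the left.

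The main point to get right is the semantic clause for $[\varphi]\psi$, whose truth condition is the material implication ``$\intr{(\varphi)}\Imp(\text{something at }\theta^\varphi)$''; the diamond $\langle\varphi\rangle\chi=\neg[\varphi]\neg\chi$ therefore carries the precondition $\intr{(\varphi)}$ as a built-in conjunct, which is the crux of Proposition \ref{validities}.\ref{shortcut}. I expect no substantive obstacle: the first item reduces to the already-established corollary to Proposition \ref{int=intint} together with the definition of $\theta^\varphi$, and the second is a direct unfolding of the abbreviations and the announcement clause. The only thing requiring care is bookkeeping of the negations inside the diamond so that the precondition $\intr{(\varphi)}$ is correctly extracted rather than its negation.
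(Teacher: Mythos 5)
Your proposal is correct and follows essentially the same route as the paper: for item 1 the paper likewise unfolds the semantic clause for $[\varphi]\psi$, identifies the two preconditions (via the semantic content of ($\intr{}$-T) and ($\intr{}$-4), which is exactly the corollary $\Intr{\br{\intr{(\varphi)}}^\theta}=\Intr{\br{\varphi}^\theta}$ you invoke), and replaces $\theta^\varphi$ by $\theta^{\intr{(\varphi)}}$. The paper proves only the first item and treats the second as routine, and your direct unfolding of $\langle\varphi\rangle\intr{(\psi)}$ as $\intr{(\varphi)}$ holding together with $(x,\theta^\varphi)\models\intr{(\psi)}$ is precisely that routine argument, carried out correctly.
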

\weg{
\begin{proof}{} 
We only show the first item.
\[
\begin{array}{lll}
 & (x, \theta)\models [\varphi]\psi \\ \mbox{iff} & (x,\theta)\models int(\varphi)\textrm{ implies } (x,\theta^\varphi)\models\psi \\
\mbox{iff} & (x,\theta)\models\intr{( int(\varphi))}\textrm{ implies } (x,\theta^\varphi)\models\psi \\ & \mbox{(by ($\intr{}$-T) and ($\intr{}$-4))} \\
\mbox{iff} & (x,\theta)\models\intr{( int(\varphi))}\textrm{ implies } (x,\theta^{\intr{(\varphi)}})\models\psi \\
\mbox{iff} & (x,\theta)\models  [\intr{(\varphi)}]\psi
\end{array}\]
\end{proof}
}

\begin{proposition}\label{update.fnc} \ 
\begin{enumerate}  
\item $\br{\psi}^{\theta^\varphi} = \br{\langle\varphi\rangle\psi}^\theta$\label{update.fnc1}
\item $\theta^\varphi = \theta^{\intr{(\varphi)}}$ \label{update.fnc2}
\item $(\theta^\varphi)^\psi = \theta^{\langle\varphi\rangle\intr{(\psi)}}$\label{update.fnc3}
\end{enumerate}
\end{proposition}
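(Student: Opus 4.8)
The plan is to prove the three items in order, using each to feed the next. For item \ref{update.fnc1} I would simply unfold both sides against the semantics. Expanding $\dia{\varphi}\psi = \neg[\varphi]\neg\psi$ gives that $(y,\theta)\models\dia{\varphi}\psi$ holds exactly when $(y,\theta)\models\intr{(\varphi)}$ and $(y,\theta^\varphi)\models\psi$. Since $(y,\theta)\models\intr{(\varphi)}$ means $y\in\Intr{\br{\varphi}^\theta}=Dom(\theta^\varphi)$, the set $\br{\dia{\varphi}\psi}^\theta$ collapses to $\{y\in Dom(\theta^\varphi)\mid (y,\theta^\varphi)\models\psi\}$, which is precisely $\br{\psi}^{\theta^\varphi}$. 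The only thing to check is that the condition $y\in Dom(\theta)$ is subsumed by $y\in Dom(\theta^\varphi)$, which holds because $\Intr{\br{\varphi}^\theta}\sse\br{\varphi}^\theta\sse Dom(\theta)$.

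For item \ref{update.fnc2} I would show that the two partial functions share a domain and agree pointwise. Their domains are $\Intr{\br{\varphi}^\theta}$ and $\Intr{\br{\intr{(\varphi)}}^\theta}$; by Proposition \ref{int=intint} and the idempotence of $\Intr{}$ (the corollary noted just after it), these coincide. The same two facts show $\theta(x)(i)\inter\Intr{\br{\varphi}^\theta}=\theta(x)(i)\inter\Intr{\br{\intr{(\varphi)}}^\theta}$, so the functions agree on their common domain.

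Item \ref{update.fnc3} is the substantive one, and I would obtain it by assembling the first two. For the domain, apply item \ref{update.fnc1} with $\intr{(\psi)}$ in place of $\psi$ together with Proposition \ref{int=intint} relative to $\theta^\varphi$ to get $\br{\dia{\varphi}\intr{(\psi)}}^\theta = \br{\intr{(\psi)}}^{\theta^\varphi} = \Intr{\br{\psi}^{\theta^\varphi}}$; taking interiors and using idempotence then shows $Dom(\theta^{\dia{\varphi}\intr{(\psi)}}) = \Intr{\br{\psi}^{\theta^\varphi}} = Dom((\theta^\varphi)^\psi)$. For the values on this common domain, the right-hand side is $\theta(x)(i)\inter\Intr{\br{\psi}^{\theta^\varphi}}$, while the left-hand side unfolds to $\theta(x)(i)\inter\Intr{\br{\varphi}^\theta}\inter\Intr{\br{\psi}^{\theta^\varphi}}$.

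The main obstacle, and the only place any real content enters, is showing that the extra factor $\Intr{\br{\varphi}^\theta}$ on the left is redundant. This follows from the containment $\br{\psi}^{\theta^\varphi}\sse Dom(\theta^\varphi)=\Intr{\br{\varphi}^\theta}$: taking interiors and using idempotence gives $\Intr{\br{\psi}^{\theta^\varphi}}\sse\Intr{\br{\varphi}^\theta}$, so intersecting with $\Intr{\br{\varphi}^\theta}$ changes nothing and the two sides coincide. Thus the whole proposition reduces to careful bookkeeping of domains plus this one monotonicity-of-interior step.
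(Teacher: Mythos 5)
Your proof is correct, and its overall architecture coincides with the paper's own (commented-out) proof: item \ref{update.fnc1} by unfolding $\dia{\varphi}\psi$ against the semantics, item \ref{update.fnc2} by matching domains and values via Proposition \ref{int=intint}, and item \ref{update.fnc3} by assembling the first two. The one place you genuinely diverge is the key redundancy step in item \ref{update.fnc3}. The paper works at the level of truth sets: it rewrites the left-hand side as $\theta(x)(i)\inter\br{\intr{(\varphi)}}^\theta\inter\br{\dia{\varphi}\intr{(\psi)}}^\theta$, discharges the extra conjunct by invoking the logical validity of Proposition \ref{validities}.\ref{shortcut}, namely $\models(\intr{(\varphi)}\wedge\dia{\varphi}\intr{(\psi)})\eq\dia{\varphi}\intr{(\psi)}$, and then still needs the observation that $\br{\dia{\varphi}\intr{(\psi)}}^\theta$ is open to replace it by its interior. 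You instead stay purely topological: from $\br{\psi}^{\theta^\varphi}\sse Dom(\theta^\varphi)=\Intr{\br{\varphi}^\theta}$, monotonicity and idempotence of $\Intr{}$ give $\Intr{\br{\psi}^{\theta^\varphi}}\sse\Intr{\br{\varphi}^\theta}$, so the factor $\Intr{\br{\varphi}^\theta}$ is absorbed; and by identifying $\br{\dia{\varphi}\intr{(\psi)}}^\theta$ with $\Intr{\br{\psi}^{\theta^\varphi}}$ outright (item \ref{update.fnc1} plus Proposition \ref{int=intint} relative to $\theta^\varphi$, which is legitimate since $\Phi$ is closed under restriction to opens, so $\theta^\varphi\in\Phi$), you get openness for free. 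The trade-off: the paper's route makes the logical content explicit and reuses a validity it has already recorded, whereas yours is more self-contained --- it bypasses Proposition \ref{validities}.\ref{shortcut} entirely and in effect reproves it semantically as a one-line inclusion of opens.
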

\weg{
\begin{proof}{} 
We only show the first item. 
\[\begin{array}{llll}
\br{\psi}^{\theta^\varphi} & = & \{y\in Dom(\theta^\varphi) \ | \ (y, \theta^\varphi) \models \psi\}\\
\ & = & \{y\in \Intr{\br{\varphi}^\theta} \ | \ (y, \theta^\varphi) \models \psi\} \\ &&   \mbox{(since $Dom(\theta^\varphi) =  \Intr{\br{\varphi}^\theta}$)}\\
\ & = & \{y\in Dom(\theta)  \ | \ y\in \Intr{\br{\varphi}^\theta} \mbox{ and } (y, \theta^\varphi) \models \psi\}  \\ && \mbox{(since $\Intr{\br{\varphi}^\theta} \subseteq Dom(\theta)$)}\\
\ & = & \{y\in Dom(\theta)  \ | \ (y, \theta) \models \langle\varphi\rangle\psi\}\\ 
\ & = & \br{\langle\varphi\rangle\psi}^\theta
\end{array}\]
\weg{
\item By Definition \ref{semantics} and Proposition  \ref{int=intint}, we obtain $$Dom(\theta^\varphi)=\Intr\br{\varphi}^\theta = \Intr\br{\intr{(\varphi)}^\theta}=Dom(\theta^{\intr{(\varphi)}}).$$Therefore, both $\theta^\varphi$ and $\theta^{\intr{(\varphi)}}$ are defined for the same states. Moreover, for any $x\in Dom(\theta^\varphi)$ and any $i\in \cA$, $$\theta^\varphi (x)(i)=\theta (x)(i)\cap \Intr\br{\varphi}^\theta = \theta(x)(i)\cap \Intr\br{\intr{(\varphi)}^\theta}=\theta^{\intr{(\varphi)}}(x)(i).$$ 
Therefore, $\theta^\varphi = \theta^{\intr{(\varphi)}}$.

\myComment{\[\theta^\varphi(x)(i)=\left\{\begin{array}{ll}\theta(x)(i)\cap\Intr{(\br{\varphi}^{\theta})} & \textrm{if }x\in\Intr{(\br{\varphi}^{\theta})} \\ \textrm{undefined} & \textrm{otherwise}\end{array}\right.\]
and 
\[\theta^{\intr{(\varphi)}}(x)(i)=\left\{\begin{array}{ll}\theta(x)(i)\cap\Intr{(\br{\intr{(\varphi)}}^{\theta})} & \textrm{if }x\in\Intr{(\br{\intr{(\varphi)}}^{\theta})} \\ \textrm{undefined} & \textrm{otherwise}\end{array}\right.\]
Then, by Observation \ref{int=intint} (,i.e.,  we can write $\Intr{(\br{\intr{(\varphi)}}^{\theta})}=\Intr{(\br{\varphi}^\theta)}$),
\[\theta^{\intr{(\varphi)}}(x)(i)=\left\{\begin{array}{ll}\theta(x)(i)\cap\Intr{(\br{\varphi}^\theta)} & \textrm{if }x\in\Intr{(\br{\varphi}^\theta)} \\ \textrm{undefined} & \textrm{otherwise}\end{array}\right.\]}

\item By Definition \ref{semantics}, $Dom(\theta^{\langle\varphi\rangle\intr{(\psi)}})=\Intr{(\br{\langle\varphi\rangle \intr{(\psi)}}^\theta)}$ and $Dom((\theta^\varphi)^\psi)=\Intr{(\br{\psi}^{\theta^\varphi})}$.  By Proposition \ref{update.fnc}. \ref{update.fnc1}, $\br{\intr{(\psi)}}^{\theta^\varphi}=\br{\langle\varphi\rangle \intr{(\psi)}}^\theta$. Then, by Proposition \ref{int=intint}i we obtain $$Dom((\theta^\varphi)^\psi)=\Intr{(\br{\psi}^{\theta^\varphi})}=\br{\intr{(\psi)}}^{\theta^\varphi}= \Intr{(\br{\intr{(\psi)}}^{\theta^\varphi})}=\Intr{(\br{\langle\varphi\rangle \intr{(\psi)}}^\theta)}=Dom(\theta^{\langle\varphi\rangle\intr{(\psi)}})$$

Therefore, both $(\theta^\varphi)^\psi$ and $\theta^{\langle\varphi\rangle\intr{(\psi)}}$ are defined for the same states. Moreover, for any $x\in Dom((\theta^\varphi)^\psi)$ and $i\in \cA$, we have
\[
\begin{array}{llll}
(\theta^\varphi)^\psi(x)(i) & = & \theta^\varphi(x)(i)\cap \Intr{(\br{\psi}^{\theta^\varphi})} \\
\ & = & \theta(x)(i)\cap \Intr{(\br{\varphi}^\theta)}\cap \Intr{(\br{\psi}^{\theta^\varphi})}\nonumber\\
\ & = & \theta(x)(i)\cap \br{\intr{(\varphi)}}^\theta\cap \br{\intr{(\psi)}}^{\theta^\varphi}  & \mbox{(by Proposition \ref{int=intint})}\nonumber\\
\ & = & \theta(x)(i)\cap \br{\intr{(\varphi)}}^\theta\cap \br{\langle\varphi\rangle \intr{(\psi)}}^\theta  & \mbox{(by Proposition \ref{update.fnc}. \ref{update.fnc1})}\nonumber\\
\ & = & \theta(x)(i)\cap \br{\intr{(\varphi)}\wedge\langle\varphi\rangle \intr{(\psi)}}^\theta\nonumber\\
\ & = & \theta(x)(i)\cap\br{\langle\varphi\rangle \intr{(\psi)}}^\theta & \mbox{(by Proposition \ref{validities}.\ref{shortcut})}\nonumber\\
\ & = & \theta(x)(i)\cap \Intr{(\br{\langle\varphi\rangle \intr{(\psi)}}^\theta)} & \mbox{(by $\br{\langle\varphi\rangle \intr{(\psi)}}^\theta\in\tau$)}\nonumber\\
\ & = & \theta^{\langle\varphi\rangle\intr{(\psi)}}(x)(i) 
\end{array}\]
Therefore, $(\theta^\varphi)^\psi = \theta^{\langle\varphi\rangle\intr{(\psi)}}$.
\end{enumerate}
}
\end{proof}
}

\subsection{Example}

We illustrate our logic by a multi-agent version of Bjorndahl's convincing example in  \cite{bjorndahl} about the jewel in the tomb. Indiana Jones ($i$) and Emile Belloq ($e$) are both scouring for a priceless jewel placed in a tomb. The tomb could either contain a jewel or not, the tomb could have been rediscovered in modern times or not, and (beyond \cite{bjorndahl}), the tomb could be in the Valley of Tombs in Egypt or not. The propositional variables corresponding to these propositions are, respectively, $j$, $d$, and $t$. We represent a valuation of these variables by a triple $xyz$, where $x,y,z \in \{0,1\}$. Given carrier set $X = \{ xyz \mid  x,y,z \in \{0,1\} \}$, the topology $\tau$ that we consider is generated by the base consisting of the subsets $\{000,100,001,101\}$, $\{010\}$, $\{110\}$, $\{011\}$, $\{111\}$. The idea is that one can only conceivably know (or learn) about the jewel or the location, on condition that the tomb has been discovered. Therefore, $\{000,100,001,101\}$ has no strict subsets besides empty set: if the tomb has not yet been discovered, no one can have any information about the jewel or the location.  

A topo-model $\mathcal{M} = (X,\tau,\Phi,V)$ for this topology $(X,\tau)$ has $\Phi$ as the set of all neighbourhood functions that are partitions of $X$ for both agents, and restrictions of these functions to open sets. A typical $\theta \in \Phi$ describes complete ignorance of both agents and is defined as $\theta(s)(i) = \theta(s)(e) = X$.  This corresponds most to the situation described in \cite{bjorndahl}. A more interesting neighbourhood situation in this model is one wherein Indiana and Emile have different knowledge. Let us assume that Emile has the advantage over Indiana so far, as he knows the location of the tomb but Indiana doesn't. This is the $\theta'$ such that for all $x \in X$, $\theta'(x)(i) =  X$ whereas the partition for Emile consists of sets $\{{101}, {100}, {001}, {000} \}$, $\{ {111}, {011} \}$, $\{ {110}, {010} \}$, i.e.,  $\theta'(111)(e) = \{ {111}, {011} \}$, etc. 

We now can evaluate what Emile knows about Indiana at $111$, and confirm that this goes beyond Emil's initial epistemic neighbourhood. This situation however does not create any problems in our setting since Indiana's epistemic neighbourhoods will be determined relative to the states in  Emile's initial neighbourhood. Firstly, Emile knows that the tomb is in the Valley of Tombs in Egypt 
\[ \mathcal{M}, ({111}, \theta') \models K_e t \] \weg{and therefore also that it has been discovered
\[ \mathcal{M}, ({111}, \theta') \models K_e d \] 
but} and he also knows that Indiana does not know that
\[ \mathcal{M}, ({111}, \theta') \models K_e \neg (K_i \neg t \vel K_i  t) \]
The latter involves verifying $\mathcal{M}, ({111}, \theta') \models \M_i  t$ and 
$\mathcal{M}, ({111}, \theta') \models \M_i \neg t$. And this is true because $\theta'(111)(i) = X$, and $000, 001 \in X$, and while $\mathcal{M}, (001, \theta') \models  t$, we also have $\mathcal{M}, (000, \theta') \models \neg t$. We can also check that Emile knows that Indiana considers it possible that Emile doesn't know the tomb's location
\[ \mathcal{M}, ({111}, \theta') \models K_e \M_i \neg (K_e t \vel K_e \neg t) \]
Announcements will change their knowledge in different ways. Consider the announcement of $j$.\ This results in Emile knowing everything but Indiana still being uncertain about the location.\weg{: `the jewel is in the tomb'. (Note that this announcement does not change the knowledge of the agents if the tomb has not been discovered. Its interior is then $\emptyset$.)}
\[ \mathcal{M}, ({111}, \theta') \models [j] (K_e (j\et d\et t) \et K_i (j\et d) \et \neg K_i (t \vel K_i \neg t)) \] Model checking this involves computing the epistemic neighbourhoods of both agents given by the updated neighbourhood function $(\theta')^j$ at $111$. Observe that $\Intr \II{j}^{\theta'}=\{111, 110\}$. Therefore, \\ $(\theta')^j(111)(e)=\Intr \II{j}^{\theta'}\cap\theta'(111)(e) = \{111\}$  and  $(\theta')^j(111)(i)=\Intr \II{j}^{\theta'}\cap \theta'(x)(i)=\{111, 110\}$.

\myComment{that $(\theta')^j(111)(i) = \Intr \II{j}^{\theta'} \inter \theta'(111)(i) = \Intr \II{j}^{\theta'} \inter X = \{ {111}, {110} \}$ and that $(\theta')^j(111)(e) = \Intr \II{j}^{\theta'} \inter \theta'(111)(e) = \{ {111}, {110} \} \inter \{ {111}, {011} \} = \{111\}$.}

There is an announcement after which Emile and Indiana know everything (for example the announcement of $j\et t$):
\[ \mathcal{M}, ({111}, \theta) \models \Diamond (K_e (j \et d \et t) \et K_i (j \et d \et t)) \]
As long as the tomb has not been discovered, nothing will make Emile (or Indiana) learn that it contains a jewel or where the tomb is located: \[ \mathcal{M} \models \neg d \rightarrow \Box (\neg (K_e j \vel K_e \neg j) \et \neg (K_e t \vel K_e \neg t))  \]

\section{Axiomatization} \label{sec.axiomatization}

We now provide the axiomatizations of $EL_\intr{}$, $PAL_\intr{}$, and $APAL_\intr{}$, and prove their soundness and completeness with respect to the proposed semantics.

\begin{table}[h!]
\begin{itemize}  
\item [] (P) \ \ \ \ \ \ \ \ \ \ all instantiations of propositional tautologies
\item [] ($K$-K) \ \ \ \ \ $K_i(\varphi\imp\psi)\imp (K_i\varphi \imp K_i\psi)$
\item [] ($K$-T) \ \ \ \ \ $K_i\varphi\imp \varphi$ 
\item [] ($K$-4) \ \ \ \ \ $K_i\varphi\imp K_iK_i\varphi$
\item [] ($K$-5) \ \ \ \ \ $\neg K_i\varphi\imp K_i\neg K_i\neg\varphi$
\item [] ($\intr{}$-K) \ \ \ $\intr{(\varphi\imp\psi)}\imp (\intr{(\varphi)} \imp \intr{(\psi)})$
\item [] ($\intr{}$-T) \ \ \ $\intr{(\varphi)}\imp \varphi$
\item [] $(\intr{}$-4) \ \ \ $\intr{(\varphi)}\imp \intr{(\intr{(\varphi)})}$
\item [] ($K_\intr{}$) \ \ \ \ \ $K_i\varphi\imp\intr{(\varphi)}$
\myComment{\item [(A2)] $[\chi](\varphi\imp\psi)\imp ([\chi]\varphi \imp [\chi]\psi)$
\item [] (R1) \ $\Box(\varphi\imp\psi)\imp (\Box\varphi \imp \Box\psi)$}
\item [] (R1) \ \ \ \ \ \ $[\varphi]p\leftrightarrow (\intr{(\varphi)}\imp p)$
\item [] (R2) \ \ \ \ \ \ $[\varphi]\neg\psi\leftrightarrow (\intr{(\varphi)}\imp\neg[\varphi]\psi)$
\item [] (R3) \ \ \ \ \ \ $[\varphi](\psi\wedge\chi)\leftrightarrow [\varphi]\psi\wedge[\varphi]\chi$
\ayycolor{\item [] (R4) \ \ \ \ \ \ $[\varphi]\intr{(\psi)}\leftrightarrow (\intr{(\varphi)}\imp \intr{([\varphi]\psi)})$}
\item [] (R5) \ \ \ \ \ \ $[\varphi]K_i\psi\leftrightarrow (\intr{(\varphi)}\imp K_i[\varphi]\psi)$
\item [] (R6) \ \ \ \ \ \ $[\varphi][\psi]\chi\leftrightarrow [\neg [\varphi] \neg\intr{(\psi)}]\chi$
\item [] (R7) \ \ \ \ \ \ $\Box\varphi\imp [\chi]\varphi$  \hspace{2cm} where $\chi\in \cL_{PAL_\intr{}}$
\item [] (DR1) \ \ \ From $\phi$ and $\phi \rightarrow \psi$, infer $\psi$ 
\item [] (DR2) \ \ \ From $\varphi$, infer $K_i\varphi$
\item [] (DR3) \ \ \ From $\varphi$, infer $\intr{(\varphi)}$
\item [] (DR4) \ \ \ From $\varphi$, infer $[\psi]\varphi$
\item [] (DR5) \ \ \ From $\xi([\psi]\chi)$ for all $\psi\in\mathcal{L}_{PAL_\intr{}}$, infer $\xi(\Box\chi)$
\end{itemize}
\caption{Axiomatizations $EL_\intr{}$, $PAL_\intr{}$, and $APAL_\intr{}$}
\label{axiomstable}
\end{table}
\begin{definition}
The axiomatization $APAL_\intr{}$ is given in Table \ref{axiomstable}. The axiomatization $PAL_\intr{}$ is the one without (DR5) and (R7). We get $EL_\intr{}$ if we further remove axioms (R1)-(R6) and the rule (DR4).
\end{definition}
The parts (DR1) to (DR5) are the {\em derivation rules} and the other parts are the {\em axioms}. A formula is a \emph{theorem} of $APAL_\intr{}$, notation $\vdash \varphi$, if it belongs to the smallest set of formulas  containing the axioms and closed under the derivation rules. (Similarly for $EL_\intr{}$ and $PAL_\intr{}$.)

\begin{lemma}
Axiomatization $APAL_\intr{}$ satisfies substitution of equivalents. If $\vdash \phi\leftrightarrow\psi$, then $\vdash \chi[p/\phi] \leftrightarrow \chi[p/\psi]$.
\end{lemma}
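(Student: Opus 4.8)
The plan is to argue by induction on the structure of $\chi$, keeping $\vdash \phi \leftrightarrow \psi$ fixed and abbreviating $\chi[p/\phi]$ by $\chi^-$ and $\chi[p/\psi]$ by $\chi^+$. The base cases are immediate: for $\chi = p$ we have $\chi^- = \phi$ and $\chi^+ = \psi$ and are done by hypothesis, while for any other atom $\chi^- = \chi^+$ and the claim is propositional (P). The cases $\chi = \neg\chi'$ and $\chi = \chi_1 \wedge \chi_2$ follow from the induction hypothesis by (P) and (DR1). For $\chi = K_i\chi'$ and $\chi = \intr{(\chi')}$ I would use the usual congruence-from-necessitation argument: from $\vdash (\chi')^- \leftrightarrow (\chi')^+$ apply (DR2) (resp.\ (DR3)) to each implication and distribute with ($K$-K) (resp.\ ($\intr{}$-K)) to get $\vdash K_i(\chi')^- \leftrightarrow K_i(\chi')^+$ (resp.\ the $\intr{}$-version).

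The substance of the proof lies in two admissible congruence rules for the announcement modality, which I would establish first. Rule (I), replacement \emph{in the consequent}: if $\vdash \psi \leftrightarrow \psi'$ then $\vdash [\varphi]\psi \leftrightarrow [\varphi]\psi'$. I would obtain this without induction, by combining (R2) and (R3) propositionally into the derived equivalence $\vdash [\varphi](\psi \to \psi') \leftrightarrow (\intr{(\varphi)} \to ([\varphi]\psi \to [\varphi]\psi'))$, then feeding $\vdash \psi \to \psi'$ through (DR4) and closing the $\neg\intr{(\varphi)}$ branch with a \emph{vacuity lemma} $\vdash \neg\intr{(\varphi)} \to [\varphi]\psi$. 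Rule (II), replacement \emph{in the announcement}: if $\vdash \varphi \leftrightarrow \varphi'$ then $\vdash [\varphi]\psi \leftrightarrow [\varphi']\psi$. Both the vacuity lemma and Rule (II) I would prove by induction on the $<^S_d$-complexity of the consequent $\psi$ (uniformly in $\varphi,\varphi'$), using (R1)--(R5) to push the announcement past $p$, $\neg$, $\wedge$, $\intr{}$, $K_i$ (closing with the already-proved Boolean, $K_i$- and $\intr{}$-congruences, and with $\vdash \intr{(\varphi)} \leftrightarrow \intr{(\varphi')}$), using (R6) together with $\intr{(\delta_1)}\po [\delta_1]\delta_2$ (Lemma \ref{lemma1}.\ref{lemma1.5}) for the $\psi = [\delta_1]\delta_2$ case, and using (R7) and (DR5) for the $\psi = \Box\delta$ case, where $[\alpha]\delta \po \Box\delta$ for $\alpha \in \cL_{PAL_\intr{}}$ (Lemma \ref{lemma1}.\ref{lemma1.4}) keeps the recursion well-founded.

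Granting Rules (I) and (II), the two remaining cases of the main induction close quickly. For $\chi = [\chi_1]\chi_2$ I factor the goal as $[\chi_1^-]\chi_2^- \leftrightarrow [\chi_1^+]\chi_2^-$ (Rule (II), from $\vdash \chi_1^- \leftrightarrow \chi_1^+$) chained with $[\chi_1^+]\chi_2^- \leftrightarrow [\chi_1^+]\chi_2^+$ (Rule (I), from $\vdash \chi_2^- \leftrightarrow \chi_2^+$). For $\chi = \Box\chi'$, to prove $\vdash \Box(\chi')^- \to \Box(\chi')^+$ I apply (DR5) to the necessity form $\xi(\sharp) := \Box(\chi')^- \to \sharp$: for each $\alpha \in \cL_{PAL_\intr{}}$, (R7) gives $\vdash \Box(\chi')^- \to [\alpha](\chi')^-$, and Rule (I) rewrites $[\alpha](\chi')^-$ to $[\alpha](\chi')^+$, so $\vdash \xi([\alpha](\chi')^+)$ for all such $\alpha$ and (DR5) yields $\vdash \xi(\Box(\chi')^+)$; the converse is symmetric.

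The main obstacle is exactly the interaction of announcements with $\Box$. Because there is no reduction axiom for $[\varphi]\Box\psi$, announcements cannot in general be eliminated, so neither Rule (II) nor the vacuity lemma can be proved by a naive recursion that strips announcements off; instead the $\Box$-clause must be handled through (DR5), reducing a claim about $[\varphi]\Box(\cdots)$ to the family of claims about $[\varphi][\alpha](\cdots)$ with $\alpha \in \cL_{PAL_\intr{}}$. Making this legitimate is precisely what the complexity apparatus of Definitions \ref{size}--\ref{depth} is for: the $\Box$-depth (the primary measure) drops when $\Box$ is replaced by an announcement of a $\Box$-free $\alpha$ (Lemma \ref{lemma1}.\ref{lemma1.4}), and the reduction (R6) either drops the $\Box$-depth or strictly decreases the size (Lemma \ref{lemma2}.\ref{lemma2.3}), so the nested inductions all descend in the well-founded order $<^S_d$ of Lemma \ref{lemma1}.1.
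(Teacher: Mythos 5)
Your argument is correct, but it is worth knowing that the paper does not actually carry this proof out: its entire ``proof'' is a deferral to the analogous result for public announcement logic in the book of van Ditmarsch, van der Hoek and Kooi, with the remark that the proof ``is not trivial but proceeds along similar lines.'' That cited argument does not transfer verbatim, because in $\cL_{APAL_\intr{}}$ announcements cannot be eliminated --- there is no reduction axiom for $[\varphi]\Box\psi$ --- which is exactly the obstacle you isolate. Your proposal supplies precisely the machinery the paper leaves implicit, and it is the right machinery: the vacuity lemma $\vdash \neg\intr{(\varphi)}\to[\varphi]\psi$ (whose instance $\psi=\bot$ is the paper's Proposition \ref{derivation}), the derived distribution equivalence $\vdash [\varphi](\psi\to\psi')\leftrightarrow(\intr{(\varphi)}\to([\varphi]\psi\to[\varphi]\psi'))$ obtained propositionally from (R2)/(R3), congruence in announcement position via (R1)--(R6) together with the $\intr{}$-congruence $\vdash\intr{(\varphi)}\leftrightarrow\intr{(\varphi')}$, and (DR5) applied to the necessity forms $\varphi_0\to\sharp$ and $[\varphi]\sharp$ for the two $\Box$-cases --- this is in substance the replacement-of-equivalents argument used for APAL by Balbiani et al., here adapted to the $\intr{}$-setting. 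Two choices you make deserve explicit praise because a naive variant would fail. First, you run the inductions for Rule (II) and for vacuity on the $<^S_d$-complexity of the \emph{consequent}, uniformly in the announced formulas; this is essential, since the order on whole formulas can fail: if $d(\varphi)>d(\delta)$ then $d([\varphi][\alpha]\delta)=d([\varphi]\Box\delta)$ while $S([\varphi][\alpha]\delta)>S([\varphi]\Box\delta)$, so $[\varphi][\alpha]\delta \not<^S_d [\varphi]\Box\delta$ in general, whereas on consequents Lemma \ref{lemma1}.\ref{lemma1.4} gives $[\alpha]\delta <^S_d \Box\delta$ outright. Second, there is no circularity in your lemma dependencies: vacuity is self-contained (its $\neg$, $\intr{}$ and $K_i$ cases are immediate from the $(\intr{(\varphi)}\to\cdots)$ shape of (R2), (R4), (R5); its nested-announcement case goes through (R6), the induction hypothesis at the smaller consequent $\neg\intr{(\delta_1)}$, and the contrapositive of ($\intr{}$-T); its $\Box$ case through (DR5)); Rule (I) then indeed needs no induction; and Rule (II) invokes only Rule (I) and its own hypothesis at $<^S_d$-smaller consequents. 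In short: your route is the one the paper gestures at, but it genuinely goes beyond what the cited PAL proof contains, and it is complete where the paper's text is not.
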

\begin{proof}{} \
In the above, $\chi[p/\phi]$ means uniform substitution of $\phi$ for $p$. The proof is not trivial but proceeds along similar lines as for public announcement logic, see \cite{hvdetal.del:2007}.
\end{proof}
\begin{proposition}\label{derivation}
$[\varphi]\bot\leftrightarrow \neg\intr(\varphi)$ is a theorem of $APAL_\intr{}$.
\end{proposition}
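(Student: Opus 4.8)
The plan is to reduce $[\varphi]\B$ to $\neg\intr{(\varphi)}$ by pushing the announcement operator inward with the reduction axioms (R1)--(R3), and then to simplify the outcome by propositional reasoning (P) and substitution of equivalents. The one point requiring attention is that $\B$ is not a primitive symbol but the abbreviation $p\et\neg p$; consequently the atomic reduction axiom (R1) cannot be applied to $\B$ directly, and I must first split the conjunction before any atomic reduction is available.

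First I would rewrite $[\varphi]\B$ as $[\varphi](p\et\neg p)$ and apply (R3) to obtain $\vdash [\varphi]\B\eq([\varphi]p\et[\varphi]\neg p)$. I then treat the two conjuncts separately. Axiom (R1) gives $\vdash[\varphi]p\eq(\intr{(\varphi)}\imp p)$. For the second conjunct, (R2) gives $\vdash[\varphi]\neg p\eq(\intr{(\varphi)}\imp\neg[\varphi]p)$, into which I substitute the (R1)-equivalent of $[\varphi]p$ (using substitution of equivalents) to get $\vdash[\varphi]\neg p\eq(\intr{(\varphi)}\imp\neg(\intr{(\varphi)}\imp p))$. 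A short propositional computation rewrites $\neg(\intr{(\varphi)}\imp p)$ as $\intr{(\varphi)}\et\neg p$ and then collapses $\intr{(\varphi)}\imp(\intr{(\varphi)}\et\neg p)$ to $\intr{(\varphi)}\imp\neg p$, yielding $\vdash[\varphi]\neg p\eq(\intr{(\varphi)}\imp\neg p)$.

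Conjoining the two equivalences, and once more using substitution of equivalents within $\vdash [\varphi]\B\eq([\varphi]p\et[\varphi]\neg p)$, reduces the goal to the purely propositional claim $\vdash((\intr{(\varphi)}\imp p)\et(\intr{(\varphi)}\imp\neg p))\eq\neg\intr{(\varphi)}$. This holds by (P): the conjunction of the two implications is equivalent to $\intr{(\varphi)}\imp(p\et\neg p)$, that is $\intr{(\varphi)}\imp\B$, which is exactly $\neg\intr{(\varphi)}$. Chaining all these equivalences delivers $\vdash[\varphi]\B\eq\neg\intr{(\varphi)}$.

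Since every step is an instance of a reduction axiom, an application of substitution of equivalents, or propositional reasoning, there is no genuine obstacle in this derivation. The only thing to keep track of is the bookkeeping forced by $\B$ being an abbreviation for $p\et\neg p$, which is precisely why (R3) must be invoked to break up the conjunction before the atomic axiom (R1) becomes usable.
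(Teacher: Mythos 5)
Your proof is correct and takes essentially the same route as the paper's own derivation, which likewise unfolds $\bot$ as $p\wedge\neg p$, splits with (R3), reduces the conjuncts via (R1) and (R2) (including the same substitution of the (R1)-equivalent of $[\varphi]p$ inside (R2)), and finishes by propositional reasoning to reach $\neg\intr{(\varphi)}$. The only difference is presentational: you maintain explicit biconditionals throughout, while the paper writes the same steps as a derivation from the assumption $[\varphi](p\wedge\neg p)$ with the propositional steps left implicit.
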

\weg{
\begin{proof}{}
Propositional steps are implicit.
\[
\begin{array}{lll}
1. & [\varphi](p\wedge \neg p) & \mbox{assumption} \\
2. & [\varphi]p\wedge[\varphi] \neg p & \mbox{(R3)}, 1. \\
3. & [\varphi]p & 2. \\
4. & \intr{(\varphi)} \imp p  & \mbox{(R1)}, 3. \\
5. & [\varphi] \neg p & 2. \\
6. & \intr{(\varphi)}\imp \neg [\varphi]p & \mbox{(R2)}, 5. \\
7. & \intr{(\varphi)}\imp \neg (\intr{(\varphi)}\imp p) & \mbox{(R1)}, 6. \\
8. & \intr{(\varphi)}\imp \neg p & 7. \\
9. & \intr{(\varphi)}\imp (p \wedge \neg p) & 4.,8.  \\
10. & \neg\intr{(\varphi)} & 9. \\
\end{array}\]
@@proof environment problem remove later@@
\end{proof}
}

\begin{proposition}
$APAL_\intr{}$ is sound with respect to the class of all topo-models.
\end{proposition}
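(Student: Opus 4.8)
The plan is to show that every axiom in Table~\ref{axiomstable} is valid and that every derivation rule preserves validity; soundness then follows by induction on the length of derivations. I would dispatch the axioms in four groups. The propositional schema (P) is immediate. The $S5$ schemata ($K$-K)--($K$-5) are valid because, as already observed right after the semantics, conditions~\ref{cond.2} and~\ref{cond.5} of Definition~\ref{function} make the relation $y\in\theta(x)(i)$ reflexive and ``locally constant'' ($y\in\theta(x)(i)$ forces $\theta(y)(i)=\theta(x)(i)$), hence an equivalence relation at the level of epistemic neighbourhoods. The interior schemata ($\intr{}$-K), ($\intr{}$-T), ($\intr{}$-4) are the $S4$ laws of the topological interior: monotonicity and preservation of finite intersections give ($\intr{}$-K), $\Intr{A}\sse A$ gives ($\intr{}$-T), and idempotence of $\Intr{}$ --- in the form of the corollary $\Intr{\br{\intr{(\varphi)}}^\theta}=\Intr{\br{\varphi}^\theta}$ to Proposition~\ref{int=intint} --- gives ($\intr{}$-4). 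The mix axiom ($K_\intr{}$) holds because $\theta(x)(i)$ is an \emph{open} neighbourhood of $x$ (conditions~\ref{cond.1},~\ref{cond.2}) contained in $\br{\varphi}^\theta$ whenever $K_i\varphi$ is true, so $x\in\Intr{\br{\varphi}^\theta}$.

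The reduction axioms (R1)--(R6) I would verify by unfolding the clause for $[\varphi]\psi$ and using the established facts about the updated function $\theta^\varphi$: (R1) uses that atoms are state-dependent only (Proposition~\ref{eval.prop}); (R2) and (R3) are bookkeeping about the definedness condition $Dom(\theta^\varphi)=\Intr{\br{\varphi}^\theta}$; (R4) and (R5) follow from $\br{\psi}^{\theta^\varphi}=\br{\langle\varphi\rangle\psi}^\theta$ (Proposition~\ref{update.fnc}.\ref{update.fnc1}) together with $\theta^\varphi(x)(i)=\theta(x)(i)\inter\Intr{\br{\varphi}^\theta}$; and (R6) is exactly Proposition~\ref{update.fnc}.\ref{update.fnc3} plus the identity $\neg[\varphi]\neg\intr{(\psi)}=\langle\varphi\rangle\intr{(\psi)}$. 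Axiom (R7) is immediate from the clause for $\Box$ by instantiating its quantifier at $\chi\in\cL_{PAL_\intr{}}$. Among the rules, (DR1) is trivial, (DR2) is standard, (DR3) uses that $\br{\varphi}^\theta=Dom(\theta)$ is open (Lemma~\ref{dom-open}) when $\varphi$ is valid, and (DR4) uses that $\theta^\psi=\theta|_{\Intr{\br{\psi}^\theta}}\in\Phi$ by condition~\ref{cond.6}, so that $(x,\theta^\psi)$ is again a neighbourhood situation.

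The main obstacle is the infinitary rule (DR5). I would try to establish the pointwise statement: for every $\cM$, every $(x,\theta)\in\cM$ and every necessity form $\xi$, if $\cM,(x,\theta)\models\xi([\psi]\chi)$ for all $\psi\in\cL_{PAL_\intr{}}$, then $\cM,(x,\theta)\models\xi(\Box\chi)$; validity preservation of (DR5) follows at once. The argument is by induction on $\xi$. The base case $\xi=\sharp$ is the defining clause of $\Box$. The cases $\xi=\varphi\imp\xi'$, $\xi=K_i\xi'$ and $\xi=[\varphi]\xi'$ all go through by commuting the universal quantifier over $\psi$ past the corresponding semantic clause and applying the induction hypothesis at the shifted situation (at $(x,\theta)$, at each $y\in\theta(x)(i)$, and at $(x,\theta^\varphi)$, respectively), since none of these clauses depends on $\psi$ in its ``non-$\xi'$'' part.

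The genuinely hard case is $\xi=\intr{(\xi')}$. Here the hypothesis gives $x\in\Intr{\br{\xi'([\psi]\chi)}^\theta}$ for every $\psi$, i.e.\ $x\in\bigcap_\psi\Intr{\br{\xi'([\psi]\chi)}^\theta}$, whereas the induction hypothesis only delivers the inclusion $\bigcap_\psi\br{\xi'([\psi]\chi)}^\theta\sse\br{\xi'(\Box\chi)}^\theta$ and hence $\Intr{(\bigcap_\psi\br{\xi'([\psi]\chi)}^\theta)}\sse\Intr{\br{\xi'(\Box\chi)}^\theta}$. The gap is precisely that the topological interior does \emph{not} commute with infinite intersections: in general $\bigcap_\psi\Intr{(A_\psi)}\not\sse\Intr{(\bigcap_\psi A_\psi)}$, so membership of $x$ in every $\Intr{(A_\psi)}$ need not yield membership in $\Intr{(\bigcap_\psi A_\psi)}$. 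Closing this gap is where I expect the real work to lie: one needs a genuinely topological argument producing from the family $\{\psi\}$ a single witnessing announcement --- equivalently, a single basic open neighbourhood of $x$ --- that controls all the $A_\psi$ simultaneously near $x$, exploiting that $\Box$ ranges over the entire fragment $\cL_{PAL_\intr{}}$. This $\intr{}$-in-the-scope case, and not the routine verification of the axioms, is the crux of the soundness proof.
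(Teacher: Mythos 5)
Your verification of the axioms and the finitary rules is correct and agrees in substance with the paper, which in fact only writes out two of these cases, ($K_\intr{}$) and (R7), exactly as you argue them; your appeals to Proposition~\ref{update.fnc} for (R4)--(R6), to Lemma~\ref{dom-open} for (DR3), and to condition~\ref{cond.6} of Definition~\ref{function} for (DR4) are all sound.

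The problem is (DR5), and you name it yourself: your induction is left unfinished at the case $\xi=\intr{(\xi')}$, so the proposal does not actually prove soundness. Moreover, the pointwise statement you chose to prove is \emph{stronger} than what soundness of a rule requires, and that is what manufactures the obstacle. The paper's induction (of which it spells out only the base case $\xi=\sharp$, declaring the rest elementary) is at the level of \emph{validities}: if $\models\xi([\psi]\chi)$ for all $\psi\in\cL_{PAL_\intr{}}$ then $\models\xi(\Box\chi)$. Under that reading the interior case closes cheaply by strip-and-restore: $\models\intr{(\alpha)}$ iff $\models\alpha$, since ($\intr{}$-T) is valid and necessitation (DR3) preserves validity (when $\alpha$ is valid, $\br{\alpha}^\theta=Dom(\theta)$, which is open by Lemma~\ref{dom-open}); so from $\models\intr{(\xi'([\psi]\chi))}$ for all $\psi$ one strips $\intr{}$, applies the induction hypothesis, and restores $\intr{}$ --- no commuting of $\Intr{}$ with infinite intersections is ever needed for an outermost interior, contrary to your diagnosis that this is \emph{the} crux. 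The same strip-and-restore handles $K_i\xi'$. That said, your instinct is not baseless: a purely validity-level induction is itself not immediate at the constructors $\varphi\imp\xi'$ and $[\varphi]\xi'$ (validity of $\varphi\imp\xi'(\cdot)$ does not reduce to validity of $\xi'(\cdot)$), and there one is pushed back toward exactly the pointwise reasoning you set up; a form in which $\intr{}$ sits \emph{beneath} an implication or an announcement recombines both difficulties, and the paper's one-line dismissal of the inductive cases does not address this. So your attempt correctly isolates a genuine subtlety that the paper glosses over, but as a proof it is incomplete: at a minimum you should have run the validity-level induction with the $\models\intr{(\alpha)}$ iff $\models\alpha$ argument, which is what the paper's proof intends, instead of stopping at the pointwise interior case.
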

\begin{proof}{}
Let  $\cM=\Model$ be a topo-model, $(x, \theta)\in\cM$ and $\varphi, \psi, \chi\in \cL_{APAL_\intr{}}$. We show three cases.
\weg{
\begin{itemize}  
\item[($K$-4)]  $K_i\varphi \imp K_i K_i\varphi$

Suppose $(x, \theta)\models K_i\varphi$. This means, $(y, \theta)\models\varphi$ for all $y\in \theta(x)(i)$.
We want to show that for all $y\in \theta(x)(i)$ and for all $z\in\theta(y)(i)$, we have $(z, \theta)\models \varphi$.

  Let $y\in \theta(x)(i)$ and  $z\in \theta(y)(i)$. By Definition \ref{model}.\ref{cond.5}, $\theta(y)(i)=\theta(x)(i)$ and Definition \ref{model}.\ref{cond.2} guarantees that $\theta(y)(i)\not =\emptyset$. Therefore, by assumption, $(z, \theta)\models \varphi$. In fact, the requirement $\theta(y)(i)\subseteq \theta(x)(i)$ would be sufficient to make this axiom sound. 

\item[($K$-5)]  $\neg K_i\varphi\imp K_i\neg K_i\varphi$

Suppose $(x, \theta)\models \neg K_i\varphi$. This means, $(y_0, \theta)\not \models\varphi$ for some $y_0\in \theta(x)(i)$. We want to show that  for all $y\in \theta(x)(i)$, there exists a  $z\in\theta(y)(i)$ such that  $(z, \theta)\not\models \varphi.$

 Let $y\in \theta(x)(i)$. By Definition \ref{model}.\ref{cond.5}, $\theta(x)(i)=\theta(y)(i)$. Therefore, as $y_0\in \theta(y)(i)$ by assumption, we have that there is a $z\in\theta(y)(i)$, namely $z=y_0$, such that $(z, \theta)\not \models \varphi$. Again, in fact, the requirement $\theta(x)(i)\subseteq \theta(y)(i)$ would be sufficient to make the axiom sound. However, as we have  Definition \ref{model}.\ref{cond.2}, $\theta(x)(i)\subseteq \theta(y)(i)$ implies $\theta(y)(i)\subseteq \theta(x)(i)$, thus, $\theta(y)(i)= \theta(x)(i)$.
 
 \item[}

($\mathbf{K_\intr{}}$) \ \ Suppose $(x, \theta)\models K_i\varphi$. This means, $(y, \theta)\models\varphi$ for all $y\in \theta(x)(i)$. Hence, $\theta(x)(i)\subseteq \br{\varphi}^\theta$.  By Definition \ref{function},  $\theta(x)(i)$ is an open neighbourhood of $x$,  therefore we have $x\in \Intr{\br{\varphi}^\theta}$,  i.e., $(x, \theta)\models \intr{(\varphi)}$.

{\bf (R7)} \ \ Let  $\chi\in \cL_{PAL_\intr{}}$ and suppose $(x, \theta)\models \Box\varphi$. By the semantics, we have 
$ (x, \theta)\models \Box\varphi \mbox{ iff } (\forall \psi\in\cL_{PAL_\intr{}})( (x, \theta)\models [\psi]\varphi).$
Therefore, in particular, $(x, \theta)\models [\chi]\varphi$.

{\bf (DR5)} \ \  Suppose  $\xi([\psi]\chi)$ is valid for all $\psi\in\mathcal{L}_{PAL_\intr{}}$. The proof follows by induction on the complexity of $\xi(\sharp)$.  In case $\xi(\sharp)=\sharp$, we have $\xi([\psi]\chi)= [\psi]\chi$. By assumption, we have that $ [\psi]\chi$ is valid for all $\psi\in\mathcal{L}_{PAL_\intr{}}$. This implies $\cM, (x, \theta)\models [\psi]\chi$ for all $\psi\in\mathcal{L}_{PAL_\intr{}}$, all topo-models $\cM$, and $(x,\theta)\in\cM$. Therefore, by the semantics, $\cM,(x, \theta)\models \Box\chi$, i.e., $\cM,(x, \theta)\models \xi(\Box\chi)$. All other, inductive, cases are elementary.
\end{proof}
\begin{corollary}
The axiomatizations $EL_\intr{}$ and $PAL_\intr{}$ are sound with respect to the class of all topo-models.
\end{corollary}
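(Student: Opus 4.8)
The plan is to exploit the fact that, by construction, $EL_\intr{}$ and $PAL_\intr{}$ are subsystems of $APAL_\intr{}$, so that soundness transfers for free from the preceding proposition. Concretely, by the definition of the axiomatizations, $PAL_\intr{}$ is obtained from $APAL_\intr{}$ by deleting the rule (DR5) and the axiom (R7), and $EL_\intr{}$ is obtained from $PAL_\intr{}$ by further deleting the axioms (R1)-(R6) and the rule (DR4). Hence the set of axioms of $PAL_\intr{}$ (resp.\ $EL_\intr{}$) is a subset of the axioms of $APAL_\intr{}$, and likewise for the derivation rules.

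First I would make the containment of derivations precise. Since every axiom of $EL_\intr{}$ (resp.\ $PAL_\intr{}$) is an axiom of $APAL_\intr{}$, and every derivation rule of $EL_\intr{}$ (resp.\ $PAL_\intr{}$) is a derivation rule of $APAL_\intr{}$, any derivation witnessing $\vdash_{EL_\intr{}}\varphi$ (resp.\ $\vdash_{PAL_\intr{}}\varphi$) is, verbatim, a derivation witnessing $\vdash_{APAL_\intr{}}\varphi$. Formally this is a one-line induction on the length of the derivation, but essentially no work is required, because the axiom and rule schemas are literally the same in the smaller systems. It follows that every theorem of $EL_\intr{}$ and of $PAL_\intr{}$ is a theorem of $APAL_\intr{}$.

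Next I would invoke the preceding proposition, which establishes that $APAL_\intr{}$ is sound with respect to the class of all topo-models, i.e.\ $\vdash_{APAL_\intr{}}\varphi$ implies $\models\varphi$. Chaining this with the previous step yields, for every $\varphi$, that $\vdash_{EL_\intr{}}\varphi$ implies $\vdash_{APAL_\intr{}}\varphi$ implies $\models\varphi$, and analogously for $PAL_\intr{}$; this is exactly the asserted soundness. Here I would also note that validity ($\models$) is a single notion defined uniformly on all of $\cL_{APAL_\intr{}}$, and that $\cL_{EL_\intr{}}\subseteq\cL_{PAL_\intr{}}\subseteq\cL_{APAL_\intr{}}$, so restricting attention to the smaller languages introduces no mismatch in the semantic side.

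The main obstacle here is that there is essentially no obstacle: the only things to verify are the purely syntactic inclusion of the axiom and rule sets and the observation that the semantic notion of validity does not change when one passes to a sublanguage. The one place where a little care is warranted is that (R7) and (DR5) are precisely the ingredients whose soundness relies on the quantification over $\psi\in\cL_{PAL_\intr{}}$; but since we are \emph{removing} rather than adding principles, soundness can only be preserved, so no separate argument for the deleted items is needed.
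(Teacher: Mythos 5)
Your proposal is correct and matches the paper's (implicit) argument exactly: the corollary is stated without proof precisely because $EL_\intr{}$ and $PAL_\intr{}$ are obtained from $APAL_\intr{}$ by removing axioms and rules, so every derivation in the smaller systems is a derivation in $APAL_\intr{}$ and soundness transfers from the preceding proposition. Your additional remarks on the sublanguage inclusions and on the deleted principles (R7), (DR5) are accurate but not needed beyond what you already observed.
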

\section{Completeness} \label{sec.completeness}

We now show completeness for $EL_\intr{}$, $PAL_\intr{}$, and $APAL_\intr{}$ with respect to the class of all topo-models. Completeness of $EL_\intr{}$ is shown in a standard way via a canonical model construction and a Truth Lemma that is proved by induction on formula complexity. 
Completeness for $PAL_\intr{}$ is shown by reducing each formula in $\cL_{PAL_\intr{}}$ to an equivalent formula of  $\cL_{EL_\intr{}}$. The proof of the completeness for $APAL_\intr{}$ becomes more involved. Reduction axioms for public announcements no longer suffice in the $APAL_\intr{}$ case, and the inductive proof needs a subinduction where announcements are \ayycolor{considered}. Moreover, the proof system of $APAL_\intr{}$ has an infinitary derivation rule, namely the rule (DR5), and given the requirement of closure under this rule, the maximally consistent sets for that case are defined to be maximally consistent {\em theories} (see, Section \ref{section.apal}). Lastly, the Truth Lemma requires the more complicated complexity measure on formulas defined in Section \ref{sec.logic}. There, we need to adapt the completeness proof of \cite{HvD-simpleapal} to our setting.

\subsection{Completeness of $EL_\intr{}$ and $PAL_\intr{}$}

For $\mathcal{L}_{EL_\intr{}}$ we define consistent and maximally consistent sets in the usual way, see e.g.\ \cite{bjorndahl} for details, and the multi-agent aspect does not complicate the definition. Let $X^c$ be the set of all maximally consistent sets of $EL_\intr{}$. We define relations $\sim_i$ on $X^c$ as $x\sim_i y \ \mbox{iff} \ \forall\varphi\in\mathcal{L}_{EL_\intr{}}(K_i\varphi\in x \ \mbox{iff} \ K_i\varphi\in y)$.  Notice that the latter is equivalent to: $\forall\varphi\in\mathcal{L}_{EL_\intr{}}(K_i\varphi\in x \ \mbox{implies} \ \varphi\in y)$ since $K_i$ is an $S5$ modality. As each $K_i$ is of $S5$ type, every $\sim_i$ is an equivalence relation, hence, it induces equivalence classes on $X^c$.  
Let $[x]_i$ denote the equivalence class of $x$ induced by the relation $\sim_i$. Moreover, we define $\widehat{\varphi}=\{y\in X^c \ | \ \varphi\in y\}$. Observe that $x\in\widehat{\varphi}$ iff $\varphi\in x$.

\myComment{\begin{lemma}\label{lemma11a}
For all $x$, $x+\varphi$ is consistent iff $\neg\varphi\not\in x$. 
\end{lemma}}

\begin{lemma}[Lindenbaum's Lemma]\label{lindenbaum2}
Each consistent\\ set can be extended to a maximally consistent set.
\end{lemma}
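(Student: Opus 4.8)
The plan is to run the standard Lindenbaum construction, the decisive point being that for $EL_\intr{}$ every derivation rule is finitary, so consistency is a property of finite subsets and the usual limit argument goes through unchanged. First I would fix an enumeration $\varphi_0, \varphi_1, \varphi_2, \dots$ of all formulas of $\cL_{EL_\intr{}}$; this is available because $\mathit{Prop}$ is countable and $\cA$ is finite, so the language is countable.

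Given a consistent set $\Gamma$, I would build an increasing chain $\Gamma = \Gamma_0 \sse \Gamma_1 \sse \cdots$ by setting $\Gamma_{n+1} = \Gamma_n \cup \{\varphi_n\}$ whenever this is consistent, and $\Gamma_{n+1} = \Gamma_n \cup \{\neg\varphi_n\}$ otherwise. An easy induction then shows that every $\Gamma_n$ is consistent: the base case is the hypothesis on $\Gamma$, and in the inductive step, if both $\Gamma_n \cup \{\varphi_n\}$ and $\Gamma_n \cup \{\neg\varphi_n\}$ were inconsistent, then $\Gamma_n \vdash \neg\varphi_n$ and $\Gamma_n \vdash \varphi_n$, so by propositional reasoning (axiom (P) together with (DR1)) $\Gamma_n$ would already be inconsistent, contradicting the induction hypothesis.

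I would then put $\Gamma^+ = \bigcup_{n} \Gamma_n$ and check the two required properties. Maximality is immediate from the construction: for every formula $\varphi = \varphi_n$, either $\varphi_n$ or $\neg\varphi_n$ is placed into $\Gamma_{n+1} \sse \Gamma^+$. For consistency I would argue that any derivation of $\bot$ from $\Gamma^+$ is a finite object, and since each rule (DR1)--(DR3) has finitely many premises, it can mention only finitely many members of $\Gamma^+$; these all lie in a single $\Gamma_n$ (as the chain is increasing), so $\Gamma_n$ would be inconsistent, a contradiction. Hence $\Gamma^+$ is a maximally consistent set extending $\Gamma$.

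The argument is routine and presents no genuine obstacle; the only feature worth flagging is exactly what distinguishes this case from the full logic. Because $EL_\intr{}$ (and likewise $PAL_\intr{}$) contains only the finitary rules (DR1), (DR2), (DR3), consistency reduces to consistency of finite subsets, which is precisely what makes the union step legitimate. This is why the same proof cannot be reused for $APAL_\intr{}$, whose rule (DR5) has infinitely many premises and forces the passage to maximally consistent \emph{theories} in Section~\ref{section.apal}.
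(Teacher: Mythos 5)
Your proof is correct and is precisely the standard Lindenbaum construction that the paper relies on: the paper itself gives no proof of this lemma, defining (maximal) consistency for $\cL_{EL_\intr{}}$ ``in the usual way'' and deferring details to the cited literature. Your closing remark also correctly identifies the essential point---since necessitation-style rules are applied to theorems only, consistency is a property of finite subsets under the finitary rules (DR1)--(DR3), so the union step is sound, whereas the infinitary rule (DR5) is exactly what forces the move to maximally consistent \emph{theories} for $APAL_\intr{}$ in Section~\ref{section.apal}.
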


\myComment{\begin{lemma}\label{lemma5}
If $K_i\varphi\not\in x$, then there exists a maximally consistent set $y$ such that $K_ix\subseteq y$ and $\varphi\not \in y$.
\end{lemma}
\begin{proof}{}
Let $\varphi\in\mathcal{L}_{EL_\intr{}}$ and $x$ be such that $K_i\varphi\not\in x$. Thus, $\varphi \not\in K_ix$. Hence, by Lemma \ref{lemma11a}, $K_ix+\neg\varphi$ is consistent. Then, by Lemma \ref{lindenbaum2}, there exists a maximally consistent set $y$ such that $K_ix+\neg\varphi\subseteq y$. Therefore $K_i x\subseteq y$ and $\varphi\not\in y$.
\end{proof}}

\begin{definition}
\label{canon1}
We define the canonical model \\$\cX^c=(X^c, \tau^c, \Phi^c, V^c)$ as follows:
\begin{itemize}  
\item $X^c$ is the set of all maximally consistent sets;
\item $\tau^c$ is the topological space generated by the subbase $$\Sigma=\{[x]_i\cap \widehat{\intr{(\varphi)}} \ | \ x\in X^c, \varphi\in\mathcal{L}_{EL_\intr{}} \ \mbox{and} \ i\in \cA\};$$ 
\item $x\in V^c(p) \ \mbox{iff} \ p\in x, \ \mbox{for all} \ p\in\mathit{Prop}$;
\item $\Phi^c=\{\theta^*|_U\;|\;U\in\tau^c\}$, where  we define  $\theta^*:X^c\imp\cA\imp\tau^c$ as $\theta^*(x)(i)=[x]_i$, for $x\in X^c$ and $i\in\cA$.
\end{itemize}
\end{definition}

Observe that, since $\intrh{(\top)}=X^c$,  we have $[x]_i\cap \intrh{(\top)}=[x]_i\in\Sigma$ for each $i$. Therefore, each $[x]_i$ is an open subset of $X^c$. Moreover, the elements of $\Phi^c$ satisfy the required properties given in \mbox{Definition \ref{function}.}

\begin{lemma}[Truth Lemma] \label{zzz}
For every $\varphi\in\mathcal{L}_{EL_\intr{}}$ and for each $x\in X^c$, $\varphi\in x \ \mbox{iff} \ \cX^c, (x, \theta^*)\models \varphi.$
\end{lemma}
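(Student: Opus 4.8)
The plan is to prove the Truth Lemma by induction on the structure of formulas $\varphi\in\mathcal{L}_{EL_\intr{}}$, establishing for each $x\in X^c$ that $\varphi\in x$ iff $\cX^c,(x,\theta^*)\models\varphi$. The base case $\varphi=p$ is immediate from the definition of $V^c$, since $x\in V^c(p)$ iff $p\in x$. The Boolean cases $\neg\varphi$ and $\varphi\wedge\psi$ follow routinely from the induction hypothesis together with the maximal consistency of $x$ (for negation one uses that exactly one of $\varphi,\neg\varphi$ lies in $x$; for conjunction that $\varphi\wedge\psi\in x$ iff both conjuncts are in $x$).

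For the modal case $K_i\varphi$, I would argue both directions using the relation $\sim_i$ and the semantics $\cX^c,(x,\theta^*)\models K_i\varphi$ iff $(\forall y\in\theta^*(x)(i))(\cX^c,(y,\theta^*)\models\varphi)$, where $\theta^*(x)(i)=[x]_i$. From left to right, if $K_i\varphi\in x$ then for any $y\in[x]_i$ we have $x\sim_i y$, so $K_i\varphi\in y$, hence by ($K$-T) $\varphi\in y$, and the induction hypothesis gives $\cX^c,(y,\theta^*)\models\varphi$; thus $\cX^c,(x,\theta^*)\models K_i\varphi$. From right to left, I would argue contrapositively: if $K_i\varphi\notin x$, then (as noted in the text, using that $K_i$ is $S5$) the set $K_i x\cup\{\neg\varphi\}$ is consistent, so by Lindenbaum's Lemma it extends to a maximally consistent $y$ with $K_i x\subseteq y$ and $\varphi\notin y$; one checks $x\sim_i y$, so $y\in[x]_i=\theta^*(x)(i)$, and the induction hypothesis yields $\cX^c,(y,\theta^*)\not\models\varphi$, whence $\cX^c,(x,\theta^*)\not\models K_i\varphi$.

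The crux is the interior modality case $\intr{(\varphi)}$, where the semantics reads $\cX^c,(x,\theta^*)\models\intr{(\varphi)}$ iff $x\in\Intr{\br{\varphi}^{\theta^*}}$, and by the induction hypothesis $\br{\varphi}^{\theta^*}=\widehat{\varphi}\cap Dom(\theta^*)$. For the forward direction, suppose $\intr{(\varphi)}\in x$. The natural candidate open neighbourhood witnessing $x\in\Intr{\widehat{\varphi}}$ is the subbasic set $[x]_i\cap\widehat{\intr{(\varphi)}}\in\Sigma$; one must verify this set contains $x$ (using ($\intr{}$-T) and reflexivity of $\sim_i$) and is contained in $\widehat{\varphi}$ (for any $y$ in it, $\intr{(\varphi)}\in y$ gives $\varphi\in y$ by ($\intr{}$-T)). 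For the converse, suppose $x\in\Intr{\widehat{\varphi}}$; then some basic open set containing $x$ sits inside $\widehat{\varphi}$, and this basic set is a \emph{finite} intersection of subbasic sets of the form $[x_k]_{i_k}\cap\widehat{\intr{(\psi_k)}}$. The main obstacle is precisely this direction: I must convert membership in a finite intersection of subbasic opens contained in $\widehat{\varphi}$ into the single syntactic statement $\intr{(\varphi)}\in x$. The plan is to show that the conjunction $\chi:=\bigwedge_k(\text{relevant }\intr{(\psi_k)})$ is in $x$, use ($\intr{}$-4), the ($\intr{}$-K) distribution and ($K_\intr{}$) together with the $S5$ principles for each $K_{i_k}$ to derive that $\intr{(\chi)}\in x$ and that $\chi\imp\varphi$ is provable (since the finite intersection lies in $\widehat{\varphi}$ forces, via the Truth Lemma's inductive content and maximal consistency, $\chi\imp\varphi$ to be a theorem), and then conclude $\intr{(\varphi)}\in x$ by monotonicity of $\intr{}$ under ($\intr{}$-K) and (DR3); the delicate point is handling the agent-indexed subbasic components, for which conditions \ref{cond.2} and \ref{cond.5} of Definition \ref{function} (reflexivity and the partition-like behaviour of $\sim_i$) are exactly what make the subbasic sets behave well under the interior operator.
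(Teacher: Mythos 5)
Your base, Boolean, and $K_i$ cases, and the left-to-right direction of the interior case, coincide with the paper's proof (the paper likewise witnesses $x\in\Intr{\br{\varphi}^{\theta^*}}$ with the subbasic open $[x]_i\cap\widehat{\intr{(\varphi)}}$). The genuine gap is in the right-to-left direction of the interior case, exactly at the point you flag as ``delicate'': your plan separates off the conjunction $\chi:=\bigwedge_k\intr{(\psi_k)}$ and claims that $U\subseteq\widehat{\varphi}$ forces $\chi\imp\varphi$ to be a \emph{theorem}. It does not. The inclusion only constrains the maximally consistent sets lying in the agent-indexed part $\bigcap_k[x_k]_{i_k}$ of the basic open; the set $\widehat{\chi}$ is in general strictly larger, and may contain points outside $\widehat{\varphi}$. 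Concretely: suppose $K_ip\in x$ and take the basic open $U=[x]_i\cap\widehat{\intr{(\top)}}=[x]_i\subseteq\widehat{p}$; your recipe would demand $\vdash\intr{(\top)}\imp p$, i.e.\ $\vdash p$, which is absurd, while the desired conclusion $\intr{(p)}\in x$ does hold (via ($K_{\intr{}}$)). Moreover, your proposed remedy for the agent-indexed components --- invoking conditions \ref{cond.2} and \ref{cond.5} of Definition \ref{function} --- is a category error: those are semantic conditions on neighbourhood functions (which the canonical $\theta^*$ indeed satisfies), and they cannot supply the missing \emph{syntactic} step that turns the set-inclusion into membership of $\intr{(\varphi)}$ in $x$.

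What is actually needed, and what the paper does, is to keep the agent components inside the derivability statement rather than discard them. Since $x\in[x_j]_i$ for each $j$, all $[x_j]_i$ equal $[x]_i$, so $U=(\bigcap_{i\in I}[x]_i)\cap\widehat{\intr{(\gamma)}}$ with $\gamma$ the conjunction of the $\psi_k$ (normality of $\intr{}$ collapses $\bigcap_k\widehat{\intr{(\psi_k)}}$ to $\widehat{\intr{(\gamma)}}$). The inclusion $U\subseteq\br{\varphi}^{\theta^*}$ together with the induction hypothesis then gives the \emph{relative} derivability $\bigcup_{i\in I}\{K_i\sigma\mid K_i\sigma\in x\}\vdash\intr{(\gamma)}\imp\varphi$ (otherwise extend $\bigcup_{i\in I}\{K_i\sigma\mid K_i\sigma\in x\}\cup\{\intr{(\gamma)},\neg\varphi\}$ to a maximally consistent $y$; S5 puts $y$ in each $[x]_i$, hence $y\in U$, contradicting $\varphi\notin y$). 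By compactness there is a finite $\Gamma$ of formulas $K_j\alpha\in x$ with $\vdash\bigwedge_{\lambda\in\Gamma}\lambda\imp(\intr{(\gamma)}\imp\varphi)$; applying (DR3) and ($\intr{}$-K), and then the theorem $\vdash K_j\alpha\eq\intr{(K_j\alpha)}$ (from ($K_{\intr{}}$), ($\intr{}$-T) and ($K$-4)) to strip the interior off each antecedent conjunct, one gets $\vdash\bigwedge_{\lambda\in\Gamma}\lambda\imp\intr{(\intr{(\gamma)}\imp\varphi)}$. Since $\Gamma\subseteq x$ and $\intr{(\gamma)}\in x$, axioms ($\intr{}$-K), ($\intr{}$-T), ($\intr{}$-4) yield $\intr{(\varphi)}\in x$. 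So the agent-indexed subbasic components are absorbed into the knowledge formulas $\Gamma$ and eliminated by the $K$/$\intr{}$ interaction axiom; without this step your argument does not go through.
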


\begin{proof}{}
Cases for the propositional variables and Booleans are straightforward. We only show the cases for $K_i$ and $\intr{}$.

\textbf{Case} $\varphi:= K_i\psi$

($\Rightarrow$) \ \ Suppose $K_i\psi\in x$ and let $y\in\theta^*(x)(i)$. 
Since $y\in\theta^*(x)(i)=[x]_i$, by definition of $\sim_i$, we have $K_i\psi\in y$. Then, by T-axiom for $K_i$, we obtain $\psi\in y$. Then, by IH, $\cX^c, (y, \theta^*)\models \psi$. Therefore $\cX^c, (x, \theta^*)\models K_i\psi$.

($\Leftarrow$) \ \  Suppose $K_i\psi\not\in x$. Then, $\{K_i\gamma \ | \ K_i\gamma\in x\}\cup \{\neg\psi\}$ is a consistent set. We can then extend it to a maximally consistent set $y$.  As $\{K_i\gamma \ | \ K_i\gamma\in x\}\subseteq y$, we have $y\in [x]_i$ meaning that $y\in \theta^*(x)(i)$. Moreover,  since $\neg \psi\in y$, $\psi\not\in y$. Therefore, we have a maximally consistent set $y\in \theta^*(x)(i)$ such that $\psi\not\in y$. By (IH), $\cX^c, (y, \theta^*)\not\models \psi$. Hence, $\cX^c, (x, \theta^*)\not\models K_i\psi$.

\myComment{Suppose $K_i\psi\not\in x$. Then, by Lemma \ref{lemma5}, there exists a maximally consistent theory $y$ such that $K_ix\subseteq y$ and $\psi\not\in y$.  By IH, $(y, \theta^*)\not\models\psi$. Since $K_ix\subseteq y$,  we have $y\in [x]_i$ meaning that $y\in \theta^*(x)(i)$. Therefore, by the semantics, $\cX^c, (x, \theta^*)\not\models K_i\psi$.}

\textbf{Case} $\varphi:= \intr{(\psi)}$

($\Rightarrow$) \ \  Suppose $\intr{(\psi)}\in x$. Consider the set $[x]_i\cap \intrh{(\psi)}$ for some $i\in\cA$.  Obviously, $x\in[x]_i\cap \intrh{(\psi)}$ and $[x]_i\cap \intrh{(\psi)}$ is open (since it is in $\Sigma$). Now let $y\in[x]_i\cap \intrh{(\psi)}$. Since $y\in \intrh{(\psi)}$, $\intr{(\psi)}\in y$. Then, by ($\intr$-T), since $y$ is maximal consistent, we have $\psi\in y$. Thus, by IH, we have $(y, \theta^*)\models \psi$. Therefore, $y\in \br{\psi}^{\theta^*}$. This implies $[x]_i\cap \intrh{(\psi)}\subseteq\br{\psi}^{\theta^*}$. And, since $x\in[x]_i\cap \intrh{(\psi)}\in \tau^c$, we have $ x\in\Intr{\br{\psi}^{\theta^*}}$, i.e., $(x, \theta^*)\models \intr{(\psi)}$.

($\Leftarrow$)  \ \ Suppose  $(x, \theta^*)\models \intr{(\psi)}$, i.e., $ x\in\Intr{\br{\psi}^{\theta^*}}$.  Recall that the set of finite intersections of the elements of $\Sigma$ forms a base, which we denote by $B_\Sigma$, for $\tau^c$. 
$x\in\Intr{\br{\psi}^{\theta^*}}$ implies that there exists an open $U\in B_\Sigma$ such that $x\in U\subseteq \br{\psi}^{\theta^*}$.  Given the construction of $B_\Sigma$, $U$ is of the form 

$$U=\underset{i\in I_1} \bigcap [x_1]_i \cap \dots \underset{i\in I_n} \bigcap[x_k]_i \cap \underset{\eta\in \mathrm{Form_{fin}}}\bigcap \intrh{(\eta)}$$ where $I_1, \dots, I_n$ are finite subsets of $\cA$, $x_1\dots x_k\in X^c$ and $\mathrm{Form_{fin}}$ is a finite subset of $\cL_{EL_\intr{}}$. Since $\intr{}$ is a normal modality, we can simply write $$U=\underset{i\in I_1} \bigcap [x_1]_i \cap \dots \underset{i\in I_n} \bigcap[x_k]_i \cap \intrh{(\gamma)},$$ where $\underset{\eta\in \mathrm{Form_{fin}}}\bigwedge \eta:= \gamma$. \myComment{for some $\gamma\in\cL_{APAL_\intr{}}$.}Since   $x$ is in each $[x_j]_i$ with $1\leq j\leq k$, we have $[x_j]_i=[x]_i$ for all such $j$. Therefore, we have  $$x\in U= (\underset{i\in I}\bigcap[x]_i)   \cap \intrh{(\gamma)}\subseteq \br{\psi}^{\theta^*},$$ where $I= I_1\cup\dots \cup I_n$.


This implies, for all $y\in (\underset{i\in I}\bigcap[x]_i)$, if $y\in \intrh{(\gamma)}$ then $\psi\in y$.  From this, we can say $\underset{i\in I}\bigcup\{K_i\sigma \ | \ K_i\sigma \in x\}\vdash \intr{(\gamma)}\imp \psi$. 
Then, there is a finite subset $\Gamma\subseteq \underset{i\in I}\bigcup\{K_i\sigma \ | \ K_i\sigma \in x\}$ such  that $\vdash\underset{\lambda\in\Gamma}\bigwedge \lambda \imp (\intr{(\gamma)}\imp\psi)$.  It then follows: 
\[
\begin{array}{ll}
1. \vdash\intr{(\underset{\lambda\in\Gamma}\bigwedge \lambda \imp (\intr{(\gamma)}\imp\psi))} &   \mbox{(DR3)}\\
2. \vdash\intr{(\underset{\lambda\in\Gamma}\bigwedge \lambda)} \imp \intr{(\intr{(\gamma)}\imp\psi)}) &   \mbox{($\intr{}$-K)  and (DR1)}\\
3. \vdash(\underset{\lambda\in\Gamma}\bigwedge\intr{(\lambda)}) \imp \intr{(\intr{(\gamma)}\imp\psi)}) & \mbox{($\intr{}$-K)} 
\end{array}\]
Observe that each $\lambda\in \Gamma$ is of the form $K_j\alpha$ for some $K_j\alpha\in \underset{i\in I}\bigcup\{K_i\sigma \ | \ K_i\sigma \in x\}$ and we have $\vdash K_i\varphi\leftrightarrow\intr{(K_i\varphi)}$. Therefore, 
$\vdash(\underset{\lambda\in\Gamma}\bigwedge\lambda) \imp \intr{(\intr{(\gamma)}\imp\psi)})$.  Thus, since $\underset{\lambda\in\Gamma}\bigwedge\lambda\in x$ (by $\Gamma\subseteq x$),  we have $\intr{(\intr{(\gamma)}\imp\psi)})\in x$. Then, by ($\intr{}$-K), (DR1)   and since $\vdash\intr{(\intr{(\gamma)})}\leftrightarrow \intr{(\gamma)}$ and  $x\in\widehat{\intr{(\gamma)}}$ (i.e., $\intr{(\gamma)}\in x$) , we obtain $\intr{(\psi)}\in x$.
\end{proof}

\begin{theorem}\label{theorem2}
$EL_\intr{}$ is complete with respect to the class of all topo-models.
\end{theorem}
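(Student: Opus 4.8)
The plan is to run the standard Henkin-style completeness argument, using the canonical model $\cX^c$ of Definition \ref{canon1} together with the Truth Lemma (Lemma \ref{zzz}) as the two workhorses. Since soundness has already been established, only the completeness direction remains, and I would prove it in contrapositive form: every $\varphi\in\cL_{EL_\intr{}}$ with $\not\vdash\varphi$ is refuted somewhere in $\cX^c$, so that $\not\models\varphi$.

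First I would confirm that $\cX^c$ is genuinely a topo-model. The topology $\tau^c$ is well-defined as the topology generated by the subbase $\Sigma$, and by the observation following Definition \ref{canon1} the functions in $\Phi^c$ satisfy all the conditions of Definition \ref{function}. In particular, since $X^c\in\tau^c$ we have $\theta^*=\theta^*|_{X^c}\in\Phi^c$ with $Dom(\theta^*)=X^c$, so for every maximally consistent set $x$ the pair $(x,\theta^*)$ is a legitimate neighbourhood situation in $\cX^c$.

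Now suppose $\not\vdash\varphi$. Then $\{\neg\varphi\}$ is consistent, so by Lindenbaum's Lemma (Lemma \ref{lindenbaum2}) it extends to a maximally consistent set $x\in X^c$ with $\neg\varphi\in x$; by consistency of $x$ we get $\varphi\notin x$. Applying the Truth Lemma to this $x$ yields $\cX^c,(x,\theta^*)\not\models\varphi$. Since $(x,\theta^*)\in\cX^c$, this witnesses $\cX^c\not\models\varphi$, hence $\not\models\varphi$. By contraposition, $\models\varphi$ implies $\vdash\varphi$, which is precisely completeness.

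The genuinely hard work has already been front-loaded into the Truth Lemma and the verification that $\Phi^c$ meets Definition \ref{function}, so I do not expect any real obstacle at the level of the theorem itself; it is a short wrap-up. The only points demanding care are bookkeeping: confirming that $\theta^*$ lies in $\Phi^c$ (so that neighbourhood situations exist at all) and that the generated topology $\tau^c$ does not collapse the relevant opens---both guaranteed by the observation that each $[x]_i=[x]_i\cap\intrh{(\top)}\in\Sigma$ is open.
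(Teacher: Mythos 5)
Your proposal is correct and follows exactly the paper's (largely implicit) argument: the canonical model of Definition \ref{canon1}, Lindenbaum's Lemma \ref{lindenbaum2}, and the Truth Lemma \ref{zzz} are combined in the standard contrapositive wrap-up, with the only bookkeeping point---that $\theta^*\in\Phi^c$ with $Dom(\theta^*)=X^c$, so $(x,\theta^*)$ is a genuine neighbourhood situation---handled just as in the observation following Definition \ref{canon1}. Nothing is missing and nothing diverges from the paper's route.
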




\begin{theorem}
$PAL_\intr{}$ is complete with respect to the class of all topo-models.
\end{theorem}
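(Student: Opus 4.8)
The plan is to eliminate announcements by means of the reduction axioms (R1)--(R6), turning every formula of $\cL_{PAL_\intr{}}$ into a provably and semantically equivalent formula of $\cL_{EL_\intr{}}$, and then to appeal to the completeness of $EL_\intr{}$ already established in Theorem \ref{theorem2}.

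First I would define a translation $t\colon\cL_{PAL_\intr{}}\to\cL_{EL_\intr{}}$ by well-founded recursion on the order $<^S_d$. On the announcement-free part $t$ commutes with the connectives: $t(p)=p$, $t(\neg\varphi)=\neg t(\varphi)$, $t(\varphi\wedge\psi)=t(\varphi)\wedge t(\psi)$, $t(K_i\varphi)=K_i\,t(\varphi)$ and $t(\intr{(\varphi)})=\intr{(t(\varphi))}$, each recursive call being on a subformula and hence $<^S_d$-smaller by Lemma \ref{lemma1}.\ref{lemma1.2}. For a formula $[\varphi]\psi$ I branch on the outermost shape of $\psi$ and apply the matching reduction axiom, recursing only on strictly smaller arguments: (R1) sends $[\varphi]p$ to $\intr{(t(\varphi))}\imp p$; (R2) sends $[\varphi]\neg\chi$ to $\intr{(t(\varphi))}\imp\neg\,t([\varphi]\chi)$; (R3) sends $[\varphi](\chi_1\wedge\chi_2)$ to $t([\varphi]\chi_1)\wedge t([\varphi]\chi_2)$; (R4) and (R5) treat $\intr{(\chi)}$ and $K_i\chi$ analogously; and (R6) sends $[\varphi][\chi]\sigma$ to $t([\neg[\varphi]\neg\intr{(\chi)}]\sigma)$.

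The crucial point --- and the main obstacle --- is that this recursion terminates. Since every formula here lies in $\cL_{PAL_\intr{}}$, its $\Box$-depth is $0$ (Lemma \ref{lemma1}.\ref{lemma1.3}), so $<^S_d$ collapses to the well-founded size order $<^S$. For (R1)--(R5) the recursive calls are on $\varphi$ and on $[\varphi]\chi$ (or $[\varphi]\chi_1,[\varphi]\chi_2$), and a direct computation with Definition \ref{size} shows these are $<^S$-smaller than the formula being defined (e.g.\ $S([\varphi]\chi)=S(\varphi)+4S(\chi)<S([\varphi]\neg\chi)$, while $\varphi<^S_d[\varphi]\psi$ follows from Lemma \ref{lemma1}.\ref{lemma1.2} and Lemma \ref{lemma1}.\ref{lemma1.5} by transitivity). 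The delicate clause is (R6): its right-hand side is again a single announcement, so I must recurse on the whole formula $[\neg[\varphi]\neg\intr{(\chi)}]\sigma$, which is legitimate precisely because Lemma \ref{lemma2}.\ref{lemma2.3} gives $[\neg[\varphi]\neg\intr{(\chi)}]\sigma <^S_d [\varphi][\chi]\sigma$. Note that a naive ``apply the axiom and retranslate the whole right-hand side'' strategy would fail elsewhere too --- for (R3) the raw size strictly grows --- which is exactly why the recursion descends on components for (R1)--(R5), and why the weighted size of Definition \ref{size}, with its factor $4$, is needed to make (R6) decrease.

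It then remains to assemble three routine ingredients. (i) Provable equivalence $\vdash\varphi\leftrightarrow t(\varphi)$ for all $\varphi\in\cL_{PAL_\intr{}}$, shown by induction along $<^S_d$ mirroring the clauses of $t$: each step chains the relevant reduction axiom (R1)--(R6), which is a theorem, with the induction hypotheses via substitution of equivalents. (ii) Since the reduction axioms are sound ($PAL_\intr{}$ is sound), $\models\varphi\leftrightarrow t(\varphi)$, so from $\models\varphi$ we obtain $\models t(\varphi)$. (iii) As $t(\varphi)\in\cL_{EL_\intr{}}$, completeness of $EL_\intr{}$ (Theorem \ref{theorem2}) yields $\vdash_{EL_\intr{}} t(\varphi)$, hence $\vdash_{PAL_\intr{}} t(\varphi)$; combining this with (i) by modus ponens gives $\vdash_{PAL_\intr{}}\varphi$, which is the desired completeness.
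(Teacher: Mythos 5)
Your proposal is correct and takes essentially the same route as the paper: the paper's own (one-line) proof likewise reduces $\cL_{PAL_\intr{}}$ to $\cL_{EL_\intr{}}$ via the reduction axioms (R1)--(R6) and then invokes Theorem \ref{theorem2}, handling termination by a complexity measure (there, the measure of \cite[Def.\ 7.21]{hvdetal.del:2007} extended by treating $\intr{}$ like $K_i$, where you instead use the paper's own weighted size order $<^S$ together with Lemma \ref{lemma2}.\ref{lemma2.3} for the (R6) case). The details you spell out --- the translation clauses, the componentwise descent for (R1)--(R5), and the role of the factor $4$ in making (R6) decrease --- are exactly what the paper leaves implicit.
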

\begin{proof}{} 
\ayycolor{This follows from Theorem \ref{theorem2} by reduction in a standard way. The occurrences of the modality $\intr{}$ on the right-hand-side of the reduction axioms (axioms (R1)-(R6)) should not lead to any confusion: extending  the complexity measure defined in \cite[Definition 7.21 \ p. 187]{hvdetal.del:2007} to the language $\mathcal{L}_{PAL_\intr{}}$ by adding the same complexity measure for the modality $\intr{}$ as for $K_i$ gives us the desired result.} 
\end{proof}

\aybuke{I merged the seciton about completeness of $EL_\intr{}$ and $PAL_\intr{}$ because the latter was too short.  I added the necessary axiom to the axiomatization so we should not have a problem here. However, if necessary, we can add a complexity measure (following the one in the DEL book, p.187 and explain how reduction works but it will be like repetition. What do you think? I think the above explanation should be sufficient.}

\subsection{Completeness of $APAL_\intr{}$}\label{section.apal}

We now reuse the technique of \cite{HvD-simpleapal} in the setting of topological semantics. Given the closure requirement under derivation rule (DR5) it seems more proper to call maximally consistent sets \ayycolor{of $APAL_\intr{}$} maximally consistent theories, as further explained below.
\begin{definition}
A set $x$ of formulas is called a {\em theory} iff $APAL_\intr{}\subseteq x$ and $x$ is closed under (DR1) and (DR5). A theory $x$ is said to be consistent iff $\bot\not\in x$.  A theory $x$ is maximally consistent iff $x$ is consistent and any set of formulas properly containing $x$ is inconsistent.
\end{definition}
Observe that $APAL_\intr{}$ constitutes the smallest theory. Moreover, maximally consistent theories of $APAL_\intr{}$ posses the usual properties of maximally consistent sets:
\begin{proposition}\label{equivalent.defn}
For any maximally consistent theory $x$,  $\varphi\not\in x$ iff $\neg \varphi\in x$, and  $\varphi\land\psi\in x$ iff $\varphi\in x$ and $\psi\in x$.
\end{proposition}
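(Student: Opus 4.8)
The plan is to prove Proposition~\ref{equivalent.defn} directly from the definition of a maximally consistent theory, treating it as the theory-analogue of the standard Lindenbaum-style properties of maximally consistent sets. Recall that a theory $x$ is a set containing all of $APAL_\intr{}$ and closed under (DR1) (modus ponens) and (DR5). Since $APAL_\intr{}$ contains all propositional tautologies (axiom (P)) and $x$ is closed under modus ponens, $x$ is closed under propositional consequence; this is the only structural fact about theories I expect to use repeatedly.

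First I would treat the negation equivalence $\varphi\not\in x$ iff $\neg\varphi\in x$. For the direction from left to right, suppose $\neg\varphi\not\in x$; I would show that then $\varphi\in x$, so that by contraposition $\varphi\not\in x$ forces $\neg\varphi\in x$. The key tool here is maximality: consider the set $x\cup\{\varphi\}$. If this set, once closed off into a theory, were consistent, then by maximality of $x$ it would have to equal $x$, giving $\varphi\in x$; otherwise I derive a contradiction. More carefully, I would argue that if $\varphi\notin x$ then the smallest theory containing $x\cup\{\varphi\}$ properly contains $x$, hence is inconsistent by maximality, hence proves $\bot$, and by the deduction-style reasoning available through propositional closure this yields $\neg\varphi\in x$; the reverse, that $\varphi$ and $\neg\varphi$ cannot both lie in $x$, follows from consistency since $\varphi\wedge\neg\varphi\rightarrow\bot$ is a tautology in $x$ and $x$ is closed under modus ponens, so $\bot\in x$, contradicting consistency. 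The only subtlety is that one must take the \emph{theory} generated by $x\cup\{\varphi\}$ (closing under (DR1) and (DR5)) rather than merely the set, so that maximality among theories applies; I would note this explicitly.

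For the conjunction equivalence, the argument is routine once negation is settled. The implications $\varphi\wedge\psi\rightarrow\varphi$ and $\varphi\wedge\psi\rightarrow\psi$ are tautologies, hence in $x$, so closure under (DR1) gives that $\varphi\wedge\psi\in x$ implies both $\varphi\in x$ and $\psi\in x$. Conversely, $\varphi\rightarrow(\psi\rightarrow(\varphi\wedge\psi))$ is a tautology in $x$, so two applications of modus ponens give $\varphi\wedge\psi\in x$ from $\varphi\in x$ and $\psi\in x$.

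The main obstacle is not any deep mathematics but the care required around the deduction step for negation: because theories are closed under the infinitary rule (DR5) as well as (DR1), one cannot blithely invoke the ordinary deduction theorem, and I would need to lean on maximality together with the generated-theory construction rather than on a deduction theorem. Concretely, the crux is establishing that if $\neg\varphi\notin x$ then adding $\varphi$ keeps things consistent (so maximality yields $\varphi\in x$), which amounts to showing that the generated theory of $x\cup\{\varphi\}$ cannot prove $\bot$ unless $x$ already refutes $\varphi$. I expect this to be the one place deserving genuine attention, with everything else reducing to propositional closure of $x$.
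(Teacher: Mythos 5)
Your conjunction argument and the ``$\varphi$ and $\neg\varphi$ cannot both lie in $x$'' half of the negation claim are fine, and you have correctly located the crux: with the infinitary rule (DR5) in play, everything hinges on showing that if $\neg\varphi\notin x$ then $\varphi$ can be added consistently. But your proposal stops exactly there. You say this step ``deserves genuine attention'' without supplying the idea that makes it go through, and the bottom-up construction you lean on (the smallest theory containing $x\cup\{\varphi\}$) does not by itself deliver it: to show that this generated theory avoids $\bot$ when $\neg\varphi\notin x$, you would need an induction over well-founded infinitary derivation trees, i.e.\ precisely the deduction theorem you (rightly) declined to invoke. As it stands, the central implication --- $\varphi\notin x$ implies $\neg\varphi\in x$ --- remains unproven in your write-up.

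The missing idea, which is the route the paper sets up (the proposition is stated without explicit proof, but Definition \ref{theory1} and Lemmas \ref{lemma11z} and \ref{lemma11a} are the intended machinery), is to build the extension top-down rather than bottom-up: define $x+\varphi := \{\psi \mid \varphi\imp\psi\in x\}$ and verify directly that it is a theory containing $x$ and $\varphi$. The axiom and (DR1) cases are just the propositional closure you already use; the decisive point is closure under (DR5), and this is exactly where the necessity forms earn their keep: $\varphi\imp\xi(\sharp)$ is again a necessity form, so from $\varphi\imp\xi([\psi]\chi)\in x$ for all $\psi\in\cL_{PAL_\intr{}}$, closure of $x$ under (DR5) yields $\varphi\imp\xi(\Box\chi)\in x$, i.e.\ $\xi(\Box\chi)\in x+\varphi$. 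With Lemma \ref{lemma11z} in hand, Lemma \ref{lemma11a} says $x+\varphi$ is consistent iff $\neg\varphi\notin x$, and the proposition then follows along exactly your maximality skeleton: if $\varphi\notin x$, then $x+\varphi$ is a theory properly containing $x$, hence inconsistent by maximality, hence $\neg\varphi\in x$. So your outline is the right one, but without the $x+\varphi$ construction (or, equivalently, a deduction theorem whose (DR5) case exploits the implication clause in the definition of necessity forms) the proof does not close.
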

\aybuke{Sophia changed this, please check. 
}

In the setting of our axiomatization based on the infinitary rule (DR5), we will say that a set $x$ of formulas is consistent iff there exists a consistent theory $y$ such that $x\subseteq y$.
Obviously, maximal consistent theories are maximal consistent sets of formulas. Under the given definition of consistency for sets of formulas, maximal consistent sets of formulas are also maximal consistent theories.
\begin{definition}\label{theory1}
Let $\varphi\in \cL_{APAL_\intr{}}$ and $i \in \cA$. Then $x+\varphi := \{\psi \ | \ \varphi\imp\psi\in x\}$\label{theory1.1} and $K_i x :=\{\varphi \ | \ K_i\varphi\in x\}$\label{theory1.2}.
\end{definition}
\begin{lemma} \label{lemma11z}
For any theory $x$ of $APAL_\intr{}$ and \\$\varphi\in \cL_{APAL_\intr{}}$, $x+\varphi$ is a theory and it contains $x$ and $\varphi$, and $K_ix$ is a theory. 
\end{lemma}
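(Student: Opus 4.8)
The plan is to prove Lemma \ref{lemma11z} by verifying, for each of the three claims, the defining properties of a theory from the definition: containment of $APAL_\intr{}$ and closure under the derivation rules (DR1) and (DR5). I would handle the three assertions in turn: first that $x+\varphi$ is a theory, then that it contains both $x$ and $\varphi$, and finally that $K_i x$ is a theory.

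First I would treat $x + \varphi = \{\psi \mid \varphi \imp \psi \in x\}$. That it contains $x$ follows because for any $\psi \in x$, the tautology $\psi \imp (\varphi \imp \psi)$ and (DR1) give $\varphi \imp \psi \in x$, so $\psi \in x+\varphi$. That it contains $\varphi$ follows from $\varphi \imp \varphi \in x$ (a propositional tautology, hence in $APAL_\intr{}\subseteq x$). For the theory conditions: $APAL_\intr{}\subseteq x+\varphi$ holds because any theorem $\psi$ is in $x$, hence in $x+\varphi$ by the previous containment. Closure under (DR1): suppose $\psi, \psi\imp\chi \in x+\varphi$, i.e.\ $\varphi\imp\psi$ and $\varphi\imp(\psi\imp\chi)$ are in $x$; a propositional manipulation inside $x$ (using that $x$ is closed under (DR1) and contains all tautologies) yields $\varphi\imp\chi \in x$, so $\chi\in x+\varphi$. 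Closure under (DR5) is the delicate case: suppose $\xi([\psi']\chi') \in x+\varphi$ for all $\psi'\in\cL_{PAL_\intr{}}$, meaning $\varphi\imp\xi([\psi']\chi')\in x$ for all such $\psi'$. The key observation is that $\varphi\imp\xi(\sharp)$ is itself a necessity form (since $\eta\imp\xi(\sharp)$ is a necessity form whenever $\xi(\sharp)$ is, by the inductive clause $\varphi\rightarrow\xi(\sharp)$ in the definition of necessity forms). Hence applying (DR5) to $x$ with this compound necessity form gives $\varphi\imp\xi(\Box\chi')\in x$, i.e.\ $\xi(\Box\chi')\in x+\varphi$.

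For $K_i x = \{\varphi \mid K_i\varphi \in x\}$ the reasoning is parallel. Containment $APAL_\intr{}\subseteq K_i x$: any theorem $\psi$ gives $K_i\psi\in x$ by (DR2) applied within $x$ (as $x$ is a theory containing $\psi$ and closed under the rules; note (DR2) is a derived admissible step here since $\psi\in APAL_\intr{}\subseteq x$ and $K_i\psi\in APAL_\intr{}$ whenever $\psi$ is a theorem), so $\psi\in K_i x$. Closure under (DR1): from $\psi,\psi\imp\chi\in K_i x$ we have $K_i\psi, K_i(\psi\imp\chi)\in x$, and the (K-K) axiom together with (DR1) inside $x$ yields $K_i\chi\in x$, so $\chi\in K_i x$. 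Closure under (DR5) again uses that $K_i\xi(\sharp)$ is a necessity form whenever $\xi(\sharp)$ is (the clause $K_i\xi(\sharp)$ in the definition): if $\xi([\psi']\chi')\in K_i x$ for all $\psi'\in\cL_{PAL_\intr{}}$, then $K_i\xi([\psi']\chi')\in x$ for all such $\psi'$, and applying (DR5) to $x$ with the necessity form $K_i\xi(\sharp)$ gives $K_i\xi(\Box\chi')\in x$, i.e.\ $\xi(\Box\chi')\in K_i x$.

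The main obstacle is the closure of $x+\varphi$ and $K_ix$ under the infinitary rule (DR5). The whole argument hinges on recognizing that prefixing a necessity form with $\varphi\imp(\cdot)$ or with $K_i(\cdot)$ again produces a necessity form, so that the single application of (DR5) available to $x$ can be invoked with the larger necessity form rather than having to be reconstructed. Once this closure property of necessity forms is in place, each verification is a routine rewriting through the definition of $x+\varphi$ (resp.\ $K_i x$); I would state the necessity-form observation explicitly and then let the three theory conditions follow mechanically.
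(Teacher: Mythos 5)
Your proof is correct and is exactly the intended argument: the paper in fact states Lemma \ref{lemma11z} without proof (it is inherited from the technique of \cite{HvD-simpleapal}, see also \cite{balbiani08}), and the proof there hinges, just as yours does, on the observation that $\varphi\imp\xi(\sharp)$ and $K_i\xi(\sharp)$ are again necessity forms, so that closure of $x$ under the infinitary rule (DR5) transfers directly to $x+\varphi$ and $K_ix$. You also correctly handle the one subtle point: since theories are only required to be closed under (DR1) and (DR5), the inclusion $APAL_\intr{}\subseteq K_ix$ must be justified by $K_i\psi\in APAL_\intr{}$ for every theorem $\psi$ (closure of the theorem set under (DR2)), not by applying (DR2) inside $x$, which is precisely the justification you give.
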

\begin{lemma}\label{lemma11a}
Let $\varphi\in\mathcal{L}_{APAL_\intr{}}$. For all theories $x$, $x+\varphi$ is consistent iff $\neg\varphi\not\in x$. 
\end{lemma}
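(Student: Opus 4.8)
The plan is to prove both directions of the biconditional, leaning on the basic properties of theories established in Lemma \ref{lemma11z} and the maximal-consistent-theory properties in Proposition \ref{equivalent.defn}. Recall that $x + \varphi = \{\psi \mid \varphi \imp \psi \in x\}$, and that for a theory $x$ the set $x+\varphi$ is again a theory containing both $x$ and $\varphi$. The key technical fact I would isolate first is that for a theory $x$, we have $\bot \in x+\varphi$ iff $\varphi \imp \bot \in x$, simply by unwinding the definition of $x+\varphi$; and since $\varphi\imp\bot$ is propositionally equivalent to $\neg\varphi$ (with $\bot$ abbreviating $p \wedge \neg p$), and theories contain all of $APAL_\intr{}$ and are closed under (DR1), we get $\bot \in x+\varphi$ iff $\neg\varphi \in x$.

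For the direction ($\Leftarrow$), suppose $\neg\varphi \not\in x$. I want to show $x+\varphi$ is consistent, i.e.\ $\bot \not\in x+\varphi$. By the technical fact above, $\bot \in x+\varphi$ would give $\neg\varphi \in x$, contradicting the assumption; hence $\bot \not\in x+\varphi$, and since $x+\varphi$ is a theory by Lemma \ref{lemma11z}, it is a consistent theory, so $x+\varphi$ is consistent. For the direction ($\Imp$), I argue contrapositively: suppose $\neg\varphi \in x$. Then by the technical fact $\bot \in x+\varphi$, so $x+\varphi$ is not a consistent theory. It remains to rule out that $x+\varphi$ could still be a consistent \emph{set} in the weaker sense (contained in some consistent theory $y$); but any theory $y \supseteq x+\varphi$ would contain $\bot$, hence be inconsistent, so no such $y$ exists and $x+\varphi$ is inconsistent.

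The main obstacle, and the step to handle carefully, is the interplay between the two notions of consistency introduced in the excerpt: consistency of a \emph{theory} ($\bot \not\in x$) versus consistency of an arbitrary \emph{set} (embeddability in a consistent theory). Since $x+\varphi$ is already known to be a theory by Lemma \ref{lemma11z}, these two notions coincide for it, and the cleanest route is to remark this explicitly at the outset so that the whole argument can be run in terms of $\bot \in x+\varphi$. Once that identification is made, the proof reduces to the routine propositional manipulation of $\varphi \imp \bot \leftrightarrow \neg\varphi$ inside a set closed under (DR1) and containing all propositional tautologies, together with the definition of $x+\varphi$; no appeal to (DR5) or the infinitary machinery is needed here.
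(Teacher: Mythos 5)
Your proof is correct and takes essentially the same route as the paper's: both reduce the claim to the chain $\neg\varphi\in x$ iff $\varphi\imp\bot\in x$ (since $\neg\varphi\leftrightarrow(\varphi\imp\bot)$ is a theorem and theories are closed under (DR1)) iff $\bot\in x+\varphi$ (by the definition of $x+\varphi$). Your explicit observation that the set-level and theory-level notions of consistency coincide for $x+\varphi$, because it is itself a theory by Lemma \ref{lemma11z}, is a point the paper leaves implicit but does not change the argument.
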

\begin{proof}{}
Let $\varphi\in\mathcal{L}_{APAL_\intr{}}$ and $x$ be a theory. Then $\neg\varphi\in x$ iff $\varphi\imp\bot\in x$ (as $\neg\varphi \leftrightarrow \varphi\imp\bot$ is a theorem) iff $\bot\in x + \varphi$. Therefore, $x+\varphi$ is inconsistent iff $\neg\varphi\in x$, i.e., $x+\varphi$ is consistent iff $\neg\varphi\not\in x$.
\ayycolor{}
\end{proof}
\begin{lemma}[Lindenbaum's Lemma \cite{balbiani08}]\label{lindenbaum}
Each consistent theory can be extended to a maximal consistent theory.
\end{lemma}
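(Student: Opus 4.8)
The plan is to follow the standard Lindenbaum construction, modified to cope with the infinitary rule (DR5), which is the sole source of difficulty: a naive increasing union of consistent sets need not remain closed under (DR5). First I would fix an enumeration $\varphi_0, \varphi_1, \dots$ of all formulas of $\cL_{APAL_\intr{}}$ and set $x_0$ to be the given consistent theory $x$. Given a consistent theory $x_n$, I define $x_{n+1}$ by inspecting $\varphi_n$: if $x_n + \varphi_n$ is inconsistent I keep $x_{n+1} := x_n$ (so that $\neg\varphi_n \in x_n$ by Lemma \ref{lemma11a}); if $x_n + \varphi_n$ is consistent and $\varphi_n$ is not of the shape $\neg\xi(\Box\chi)$ for any necessity form $\xi$, I set $x_{n+1} := x_n + \varphi_n$; and in the remaining case, where $x_n + \varphi_n$ is consistent and $\varphi_n = \neg\xi(\Box\chi)$, I additionally insert a \emph{witness}, setting $x_{n+1} := (x_n + \varphi_n) + \neg\xi([\psi]\chi)$ for a suitably chosen $\psi \in \mathcal{L}_{PAL_\intr{}}$. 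By Lemma \ref{lemma11z} each $x_n$ is again a theory, and by construction each is consistent. The final object is $y := \bigcup_n x_n$.

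The key step, and the main obstacle, is showing that a suitable witness $\psi$ always exists in the third case. I would argue by contradiction: if $x_n + \neg\xi(\Box\chi)$ is consistent but $(x_n + \neg\xi(\Box\chi)) + \neg\xi([\psi]\chi)$ is inconsistent for \emph{every} $\psi \in \mathcal{L}_{PAL_\intr{}}$, then by Lemma \ref{lemma11a}, applied to the theory $x_n + \neg\xi(\Box\chi)$ (which is a theory by Lemma \ref{lemma11z}), we get $\xi([\psi]\chi) \in x_n + \neg\xi(\Box\chi)$ for all such $\psi$. Since $x_n + \neg\xi(\Box\chi)$ is closed under (DR5), this forces $\xi(\Box\chi) \in x_n + \neg\xi(\Box\chi)$, contradicting consistency because $\neg\xi(\Box\chi)$ also lies in it. This is precisely the point where the infinitary rule is used to manufacture a single finite witness, and it is what makes the whole argument go through.

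Finally I would verify that $y$ has the required properties. Maximality is immediate from the construction: for each $\varphi_n$, either $\varphi_n \in x_{n+1} \subseteq y$ or $\neg\varphi_n \in x_n \subseteq y$. Closure under (DR1) is routine, since any modus-ponens premises already sit in a common $x_n$. The crucial closure under (DR5) is where the witnesses pay off: if $\xi([\psi]\chi) \in y$ for all $\psi \in \mathcal{L}_{PAL_\intr{}}$ yet $\xi(\Box\chi) \notin y$, then by maximality $\neg\xi(\Box\chi) \in y$, so this formula was added in the third case at some stage $m$, whereupon its witness $\neg\xi([\psi_0]\chi)$ was also inserted; but then both $\xi([\psi_0]\chi)$ and $\neg\xi([\psi_0]\chi)$ would belong to $y$, which is impossible by Proposition \ref{equivalent.defn}. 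Hence $\xi(\Box\chi) \in y$, so $y$ is a theory. Consistency then follows, since $\bot \in y$ would put $\bot$ in some $x_n$, contradicting its consistency; and being consistent and maximal, $y$ is the desired maximal consistent theory extending $x$.
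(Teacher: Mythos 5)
Your proof is correct and takes essentially the same route as the paper, which establishes this lemma by appeal to Balbiani et al.\ \cite{balbiani08}: the standard enumeration-with-witnesses construction in which each consistently added formula of the form $\neg\xi(\Box\chi)$ is accompanied by a witness $\neg\xi([\psi]\chi)$, whose existence is secured exactly by your (DR5)-based contradiction using Lemmas \ref{lemma11z} and \ref{lemma11a} applied to the theory $x_n + \neg\xi(\Box\chi)$. One small repair: when excluding that $\xi([\psi_0]\chi)$ and $\neg\xi([\psi_0]\chi)$ both lie in $y$, do not invoke Proposition \ref{equivalent.defn} (it presupposes $y$ is already a maximally consistent theory, which is what you are proving); instead observe that both formulas would lie in a common $x_N$, which is (DR1)-closed and contains all propositional tautologies, so $\bot \in x_N$, contradicting the consistency of $x_N$.
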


\begin{lemma}\label{lemma5b}
If $K_i\varphi\not\in x$, then there is a maximally consistent theory $y$ such that $K_ix\subseteq y$ and $\varphi\not \in y$.
\end{lemma}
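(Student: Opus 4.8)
The plan is to mimic the standard modal argument showing that if a box-modality formula is absent from a maximally consistent set, one can find an accessible maximally consistent set witnessing the failure, but adapted to the theory setting of $APAL_\intr{}$. First I would argue that $K_ix$ is a theory (this is already guaranteed by Lemma \ref{lemma11z}) and that $\varphi\notin K_ix$. Indeed, by the definition $K_ix=\{\psi \mid K_i\psi\in x\}$, we have $\varphi\in K_ix$ iff $K_i\varphi\in x$; so the hypothesis $K_i\varphi\notin x$ gives $\varphi\notin K_ix$.

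Next I would form the set $K_ix + \neg\varphi$ and show it is consistent. The key observation is that $\neg\neg\varphi\notin K_ix$, which follows from $\varphi\notin K_ix$ together with the fact that $K_ix$ is a theory (hence closed under the relevant logical equivalences, so $\neg\neg\varphi\in K_ix$ would force $\varphi\in K_ix$). Then Lemma \ref{lemma11a}, applied to the theory $K_ix$ and the formula $\neg\varphi$, tells us that $K_ix + \neg\varphi$ is consistent precisely because $\neg\neg\varphi\notin K_ix$. By Lemma \ref{lemma11z}, $K_ix+\neg\varphi$ is itself a theory containing $K_ix$ and $\neg\varphi$, and it is consistent.

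Finally I would invoke Lindenbaum's Lemma (Lemma \ref{lindenbaum}) to extend this consistent theory $K_ix+\neg\varphi$ to a maximally consistent theory $y$. Since $K_ix\subseteq K_ix+\neg\varphi\subseteq y$ we get $K_ix\subseteq y$, and since $\neg\varphi\in y$ with $y$ maximally consistent we get $\varphi\notin y$ by Proposition \ref{equivalent.defn}. This $y$ is the required witness.

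The only delicate point, and the place I expect to need the most care, is the bookkeeping around consistency in the infinitary setting: because consistency of a \emph{set} is defined via embeddability in a consistent \emph{theory}, and because $K_ix+\neg\varphi$ must be verified to be closed under both (DR1) and the infinitary rule (DR5), I would lean on Lemma \ref{lemma11z} (which already packages the claim that $x+\varphi$ and $K_ix$ are theories) rather than re-proving closure by hand. Everything else is the routine $S5$/Lindenbaum manipulation, so the substance reduces to correctly chaining Lemmas \ref{lemma11z}, \ref{lemma11a}, and \ref{lindenbaum}.
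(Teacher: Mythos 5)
Your proposal is correct and follows essentially the same route as the paper's proof: the paper likewise passes from $K_i\varphi\notin x$ to $\varphi\notin K_ix$, applies Lemma \ref{lemma11a} to get consistency of $K_ix+\neg\varphi$, and extends it to a maximally consistent theory via Lindenbaum's Lemma \ref{lindenbaum}. Your added care about the double negation (checking $\neg\neg\varphi\notin K_ix$ before invoking Lemma \ref{lemma11a} on the theory $K_ix$, which is a theory by Lemma \ref{lemma11z}) merely makes explicit a small step the paper's one-line argument elides.
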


\begin{proof}{}
Let $\varphi\in\mathcal{L}_{APAL_\intr{}}$ and $x$ be such that $K_i\varphi\not\in x$. Thus, $\varphi \not\in K_ix$. Hence, by Lemma \ref{lemma11a}, $K_ix+\neg\varphi$ is consistent. Then, by Lemma \ref{lindenbaum}, there exists a maximally consistent set $y$ such that $K_ix+\neg\varphi\subseteq y$. Therefore $K_i x\subseteq y$ and $\varphi\not\in y$.
\end{proof}

\begin{lemma}\label{lemma6}
For all $\varphi\in \cL_{APAL_\intr{}}$ and all maximally consistent theories  $x$,  $\Box\varphi\in x$ iff  for all $\psi\in\cL_{PAL_\intr{}}$, \\ $[\psi]\varphi\in x$.
\end{lemma}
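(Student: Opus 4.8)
The plan is to prove Lemma~\ref{lemma6} by establishing the two directions separately, with the nontrivial work concentrated in the right-to-left implication, since the left-to-right direction follows directly from the axiomatics already in place.

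For the forward direction ($\Rightarrow$), suppose $\Box\varphi\in x$ where $x$ is a maximally consistent theory. By axiom (R7), $\vdash\Box\varphi\imp[\psi]\varphi$ for every $\psi\in\cL_{PAL_\intr{}}$. Since $x$ contains all theorems and is closed under (DR1) (modus ponens), for each such $\psi$ we get $[\psi]\varphi\in x$. This direction is essentially immediate from (R7) together with the definition of a theory.

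For the backward direction ($\Leftarrow$), assume $[\psi]\varphi\in x$ for all $\psi\in\cL_{PAL_\intr{}}$; the goal is $\Box\varphi\in x$. The natural route is to exploit closure of $x$ under the infinitary rule (DR5), which states: from $\xi([\psi]\chi)$ for all $\psi\in\mathcal{L}_{PAL_\intr{}}$, infer $\xi(\Box\chi)$. Taking the trivial necessity form $\xi(\sharp):=\sharp$ and $\chi:=\varphi$, the premises of (DR5) become exactly $[\psi]\varphi$ for all $\psi\in\cL_{PAL_\intr{}}$, and its conclusion is $\Box\varphi$. So I would argue: since every $[\psi]\varphi\in x$, and $x$ is closed under (DR1) and (DR5) by the definition of a theory, applying (DR5) with this choice of $\xi$ and $\chi$ yields $\Box\varphi\in x$. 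The point worth spelling out is that (DR5) as a closure condition on theories applies to membership (not merely to derivability from the empty set of hypotheses), so that having all the premises inside $x$ forces the conclusion into $x$.

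The main obstacle, and the step deserving the most care, is justifying the application of (DR5) at the level of an arbitrary maximally consistent theory rather than at the level of the global proof system. One must check that the notion of ``theory'' (closure under (DR1) and (DR5), containing $APAL_\intr{}$) genuinely licenses firing (DR5) whenever all premises $\{\xi([\psi]\chi)\mid\psi\in\cL_{PAL_\intr{}}\}$ lie in $x$; this is exactly what the definition of a theory provides, so the argument reduces to instantiating the rule correctly with $\xi=\sharp$ and $\chi=\varphi$. I would make explicit that no consistency or maximality is actually needed for this implication beyond $x$ being a theory, mirroring how the corresponding lemma is handled in \cite{HvD-simpleapal}; maximality is only relevant insofar as it guarantees the equivalent reformulations of membership used elsewhere in the completeness argument.
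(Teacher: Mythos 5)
Your proposal is correct and follows essentially the same route as the paper's own proof: (R7) plus closure under (DR1) for the left-to-right direction, and an application of (DR5) with the trivial necessity form $\xi(\sharp):=\sharp$ for the converse. Your additional observation that only theory-hood (not maximality or consistency) is needed is accurate and consistent with how the paper defines closure of theories under (DR5).
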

\begin{proof}{}
Let $\varphi\in \cL_{APAL_\intr{}}$ and $x$ be a maximally consistent theory.

($\Rightarrow$) \ \ Suppose $\Box\varphi\in x$. Then, by (R7) and (DR1), we have $[\psi]\varphi\in x$ for all $\psi\in\cL_{PAL_\intr{}}$.

($\Leftarrow$) \ \  Suppose  $[\psi]\varphi\in x$ for all $\psi\in\cL_{PAL_\intr{}}$. Consider the necessity form $\sharp$. By assumption, $\sharp ([\psi]\varphi)$  for all $\psi\in\cL_{PAL_\intr{}}$. Then, since $x$ is closed under (DR5), $\sharp (\Box\varphi)\in x$, i.e., $\Box\varphi\in x$ as well.
\end{proof}
The definition of  \emph{the canonical model for $APAL_\intr{}$} is the same as for $EL_\intr{}$, except that the maximally consistent sets are maximally consistent theories.\myComment{\begin{definition}[Canonical Model] 
\label{whatever}
See Def.\ \ref{canon1}.
\end{definition}}
We now come to the Truth Lemma for the logic $APAL_\intr{}$. Here we use the complexity measure $\psi\po\varphi$.
\myComment{\begin{definition}
Let $\varphi\in\cL_{APAL_\intr{}}$. Condition $P(\varphi)$ is defined as: ``For all maximally consistent theories $x$, $\varphi\in x$ iff $(x, \theta^*)\models \varphi$,'' and condition $H(\varphi)$ is defined as: ``For all formulas $\psi\in\cL_{APAL_\intr{}}$, if $\psi\po\varphi$, then $P(\psi)$.''
\end{definition}
\begin{lemma}\label{lemma7}
For all $\varphi\in\cL_{APAL_\intr{}}$, if $H(\varphi)$ then $P(\varphi)$.
\end{lemma}}

\begin{lemma}[Truth Lemma] \label{xxx}
For every $\varphi\in\mathcal{L}_{APAL_\intr{}}$ and for each $x\in X^c$, $\varphi\in x \ \mbox{iff} \ \cX^c, (x, \theta^*)\models \varphi$. 
\end{lemma}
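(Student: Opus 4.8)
The plan is to prove the Truth Lemma for $APAL_\intr{}$ by a double induction following the strategy of \cite{HvD-simpleapal}. The outer induction is on the well-founded order $\po$ from Definition \ref{orders} (which combines $\Box$-depth and size), rather than on mere subformula complexity, because the reduction-style equivalences push announcements \emph{inward} and this only decreases complexity in the $\po$ ordering, as guaranteed by Lemma \ref{lemma2} and the items of Lemma \ref{lemma1}. Concretely, I would set up an induction hypothesis of the form: for all $\psi$ with $\psi\po\varphi$, the biconditional $\psi\in x \Leftrightarrow \cX^c,(x,\theta^*)\models\psi$ holds for every maximally consistent theory $x$; and then establish the claim for $\varphi$.

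The case analysis would split on the outermost construct of $\varphi$. The atomic, Boolean, $K_i$, and $\intr{}$ cases are handled essentially as in the Truth Lemma for $EL_\intr{}$ (Lemma \ref{zzz}), appealing to the induction hypothesis on immediate subformulas, which are $\po$-smaller by Lemma \ref{lemma1}.\ref{lemma1.2}. The interesting cases are the dynamic modalities. For $\varphi = [\psi]\chi$ I would argue by a \emph{subinduction} on the structure of $\chi$, using the reduction axioms (R1)--(R6): each axiom rewrites $[\psi]\chi$ into a formula whose principal announcement has been driven one step inward, and Lemma \ref{lemma2} (items \ref{lemma2.1}--\ref{lemma2.3}) together with Lemma \ref{lemma1}.\ref{lemma1.5} certifies that the rewritten formula is strictly $\po$-smaller, so the outer induction hypothesis applies. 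The semantic side is matched using Proposition \ref{update.fnc} (to relate $\theta^\psi$ to $\theta$) and Proposition \ref{int=intint}. For the crucial $\Box$ case, $\varphi = \Box\chi$, I would use Lemma \ref{lemma6}: $\Box\chi\in x$ iff $[\psi]\chi\in x$ for all $\psi\in\cL_{PAL_\intr{}}$. By Lemma \ref{lemma1}.\ref{lemma1.4}, each $[\psi]\chi$ with $\psi\in\cL_{PAL_\intr{}}$ satisfies $[\psi]\chi\po\Box\chi$, so the induction hypothesis gives $[\psi]\chi\in x \Leftrightarrow \cX^c,(x,\theta^*)\models[\psi]\chi$; quantifying over all such $\psi$ and invoking the semantic clause for $\Box$ closes the case.

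I expect the main obstacle to be the direction of the $\Box$ case that requires producing, from the \emph{syntactic} side, announcements witnessing falsity: if $\Box\chi\notin x$, then by Lemma \ref{lemma6} there is some $\psi\in\cL_{PAL_\intr{}}$ with $[\psi]\chi\notin x$, and one must ensure this single $\psi$ yields, via the induction hypothesis, a genuine semantic counterexample $\cX^c,(x,\theta^*)\not\models[\psi]\chi$, hence $\cX^c,(x,\theta^*)\not\models\Box\chi$ because the semantic $\Box$ also quantifies over $\cL_{PAL_\intr{}}$. The delicate point is that the syntactic and semantic quantifications range over exactly the same set $\cL_{PAL_\intr{}}$, so the match is clean provided Lemma \ref{lemma6} and the induction hypothesis are aligned; the real subtlety lies in verifying that the complexity bookkeeping in the $[\psi]\chi$ subinduction never leaves the well-founded order, which is precisely what the factor $4$ in Definition \ref{size} and the inequalities of Lemma \ref{lemma2} are engineered to guarantee.
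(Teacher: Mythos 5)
Your overall architecture coincides with the paper's: a $\po$-induction whose announcement case $\varphi=[\psi]\chi$ is handled by a subinduction on $\chi$, with the reduction axioms certified $\po$-decreasing by Lemmas \ref{lemma1} and \ref{lemma2}, and with Lemma \ref{lemma6} closing the outermost $\Box$ case. But there is a genuine gap: you describe the subinduction as proceeding ``using the reduction axioms (R1)--(R6)'', and these axioms have no clause for $\chi=\Box\sigma$. The formula $[\psi]\Box\sigma$ cannot be rewritten by any reduction axiom --- there is deliberately no reduction axiom pushing an announcement past $\Box$ --- nor is it an instance of Lemma \ref{lemma6}, which concerns membership of formulas of the form $\Box\varphi$ in $x$, not of $[\psi]\Box\sigma$. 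As written, your plan simply has no case for this formula, and it is exactly the step for which the necessity-form machinery and the infinitary rule (DR5) were put into the axiomatization. The paper closes it by showing $[\psi]\Box\sigma\in x$ iff $[\psi][\eta]\sigma\in x$ for all $\eta\in\cL_{PAL_\intr{}}$: right-to-left applies (DR5) to the necessity form $[\psi]\sharp$ (this is also why one must work with maximally consistent \emph{theories}, closed under (DR5), rather than plain maximally consistent sets); left-to-right derives $\vdash[\psi]\Box\sigma\to[\psi][\eta]\sigma$ from (R7) via (DR4) and distribution of $[\psi]$ over the implication using (R1)--(R3). The complexity bookkeeping is then immediate, since $[\psi][\eta]\sigma\po [\psi]\Box\sigma$ because the $\Box$-depth drops by one irrespective of size, so the induction hypothesis applies to each $[\psi][\eta]\sigma$ and the semantic clause for $\Box$, relativized through the updated function $(\theta^*)^\psi$, finishes the case.

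Two smaller remarks. First, your claim that the $K_i$ case goes ``essentially as in'' Lemma \ref{zzz} glosses the same theory-versus-set distinction: the right-to-left direction needs Lemma \ref{lemma5b}, whose Lindenbaum argument (Lemma \ref{lindenbaum}) must produce a maximally consistent \emph{theory} extending $K_ix+\neg\psi$, i.e.\ the extension must respect closure under (DR5). Second, the ``main obstacle'' you single out --- matching the syntactic and semantic quantifications in the top-level $\Box$ case --- is in fact unproblematic for exactly the reason you give (both range over $\cL_{PAL_\intr{}}$, and Lemma \ref{lemma1}.\ref{lemma1.4} puts every $[\eta]\psi$ below $\Box\psi$); the real pressure point of the proof is the nested case $[\psi]\Box\sigma$ discussed above, which your worry paragraph does not touch.
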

\begin{proof}{} 
\myComment{Let $\varphi\in\mathcal{L}_{APAL_\intr{}}$ and $x\in \cX^c$. We prove that for all $\psi\in\cL_{APAL_\intr{}}$ such that $\psi\po\varphi$: $\psi\in x \ \mbox{iff} \ \cX^c, (x, \theta^*)\models \psi$, then $\varphi\in x \ \mbox{iff} \ \cX^c, (x, \theta^*)\models \varphi$. The proof is by induction on the structure of $\phi$, where the case $\phi = [\psi]\chi$ is proved by a subinduction on $\chi$. As $\po$ is a well-founded order, the proposition formulated in Lemma \ref{xxx} then follows directly.}

Let $\varphi\in\mathcal{L}_{APAL_\intr{}}$ and $x\in \cX^c$.  The proof is by $\po$-induction on $\varphi$, where the case $\phi = [\psi]\chi$ is proved by a subinduction on  $\chi$. We therefore consider 14 cases.

\textbf{Case} $\varphi:= p$ 
 \vspace{-0.2cm}
\[
\begin{array}{llllll}
x\in p & \mbox{iff} & x\in \nu^c(p)\nonumber \\
& \mbox{iff} & (x, \theta^*)\models p\nonumber
\end{array}\]

\noindent \textbf{Induction Hypothesis (IH)}: For all formulas $\psi\in\cL_{APAL_\intr{}}$, if $\psi\po\varphi$, then $\psi\in x \ \mbox{iff} \ \cX^c, (x, \theta^*)\models \psi$.

The cases negation, conjunction,  and interior modality are as in Truth Lemma \ref{zzz} for $EL_{int}$, where we observe that the subformula order is subsumed in the $<^S_d$ order (see Lemma \ref{lemma1}.\ref{lemma1.2}). We proceed with the knowledge operator, i.e., case $\varphi:=K_i\psi$, and \ayycolor{then with the subinduction on $\chi$ for case announcement $\varphi:= [\psi]\chi$, and finally with the case $\varphi:=\Box\psi$.}

\textbf{Case} $\varphi:= K_i\psi$ 

This case is also similar to the one in Truth Lemma \ref{zzz} for $EL_{int}$, however, using maximally consistent theories in the canonical model creates some differences. For the direction from left-to-right, see Truth Lemma \ref{zzz}. For ($\Leftarrow$), suppose $K_i\psi\not\in x$. Then, by Lemma \ref{lemma5b}, there exists a maximally consistent theory $y$ such that $K_ix\subseteq y$ and $\psi\not\in y$.  By  $\psi\po K_i\psi$ and (IH), $(y, \theta^*)\not\models\psi$. Since $K_ix\subseteq y$,  we have $y\in [x]_i$ meaning that $y\in \theta^*(x)(i)$. Therefore, by the semantics, $\cX^c, (x, \theta^*)\not\models K_i\psi$.

\textbf{Case} $\varphi:= [\psi] p$ 
\[
\begin{array}{llllll}
[\psi]p\in x &  \mbox{iff}& \intr{(\psi)}\imp p \in x  &  \text{(R1)} \\
\ & \mbox{iff}& \intr{(\psi)}\not\in x \ \mbox{or} \  p \in x \ & \text{Prop. \ref{equivalent.defn}} \\
\ & \mbox{iff}& (x, \theta^*)\not\models\intr{(\psi)} \ \mbox{or} \ (x, \theta^*)\models p  & (*) \\
\ & \mbox{iff}& (x, \theta^*)\models  [\psi]p  & \mbox{(R1)} \\
\end{array}\] 
(*): By (IH), $\intr{(\psi)}\po [\psi]p$ and $p\po [\psi]p$ (Lemma \ref{lemma1}.\ref{lemma1.5} and Lemma \ref{lemma1}.\ref{lemma1.2}).

\textbf{Case} $\varphi:= [\psi] \neg \eta$ \ \ Use (R2) and (IH) and, by  Lemma \ref{lemma1}.\ref{lemma1.5} and Lemma \ref{lemma2}.\ref{lemma2.1},  $\intr{(\psi)}\po [\psi]\neg\eta$ and \mbox{$\neg [\psi] \eta \po [\psi]\neg \eta$.}

\textbf{Case} $\varphi:= [\psi] (\eta\wedge\sigma)$ \ \ Use (R3) and (IH), $[\psi]\eta\po [\psi] (\eta\wedge\sigma)$ and  \mbox{$[\psi]\sigma\po [\psi] (\eta\wedge\sigma)$.}

\ayycolor{\textbf{Case} $\varphi:= [\psi] \intr{(\eta)}$ \ \ Use (R4) and (IH) and,  by  Lemmas \ref{lemma1}.\ref{lemma1.5}, \ref{lemma2}.\ref{lemma2.4}, $\intr{(\psi)}\po[\psi] \intr{(\eta)}$ and   \mbox{$\intr{( [\psi]\eta)}\po[\psi] \intr{(\eta)}$.}

\textbf{Case} $\varphi:= [\psi] K_i\eta$ \ \ Use (R5) and (IH) and,  by  Lemmas \ref{lemma1}.\ref{lemma1.5}, \ref{lemma2}.\ref{lemma2.2}, $\intr{(\psi)}\po[\psi] K_i\eta$ and   $K_i [\psi]\eta\po[\psi] K_i\eta$.}


\textbf{Case} $\varphi:= [\psi] [\eta]\sigma$ \ \ Use (R6) and (IH) and, by Lemma \ref{lemma2}.\ref{lemma2.3},  \mbox{$[\neg [\psi]\neg\intr{(\eta)}]\sigma\po [\psi][\eta]\sigma$.}  

\textbf{Case} $\varphi:= [\psi] \Box \sigma$ \ \ For all $\eta\in\cL_{PAL_\intr{}}$, $[\psi][\eta]\sigma\po  [\psi] \Box \sigma$, as  $[\psi] \Box \sigma$ has one more $\Box$ than $[\psi][\eta]\sigma$. \myComment{(also see Corollary \ref{complexity}).} Therefore, it  suffices to show $ [\psi] \Box \sigma\in x \ \mbox{iff} \ \forall \eta\in\cL_{PAL_\intr{}}, [\psi][\eta]\sigma\in x.$

($\Leftarrow$) \ \ Consider the necessity form $[\psi]\sharp$ and assume that for all $\eta\in\cL_{PAL_\intr{}}$, $[\psi][\eta]\sigma\in x$, i.e., for all $\eta\in\cL_{PAL_\intr{}}$, $[\psi]\sharp ([\eta]\sigma)\in x$ . As $x$ is closed under (DR5), we obtain $ [\psi]\sharp (\Box\sigma)\in x$, i.e., $[\psi]\Box\sigma\in x$. 

($\Rightarrow$) \ \ Suppose $[\psi]\Box \sigma\in x$.
We have 
\[
\begin{array}{ll}
\vdash \Box \sigma\to[\eta]\sigma, \ \mbox{for all} \ \eta\in\cL_{PAL_\intr{}} &  \mbox{(R7)}\\
\vdash [\psi](\Box \sigma\to[\eta]\sigma) \ \mbox{for all} \ \eta\in\cL_{PAL_\intr{}} & \mbox{(DR4)}\\
\vdash [\psi]\Box \sigma\to[\psi][\eta]\sigma, \ \mbox{for all} \ \eta\in\cL_{PAL_\intr{}} &  \mbox{(DR1), (R1-R3)}
\end{array}\]
Therefore, for all $\eta\in\cL_{PAL_\intr{}}, [\psi][\eta]\sigma\in x$.  As $[\psi][\eta]\sigma\po  [\psi] \Box \sigma$ for all $\eta\in\cL_{PAL_\intr{}}$, by (IH), we have for all $\eta\in\cL_{PAL_\intr{}}, (x, \theta^*)\models [\psi][\eta]\sigma$. Then, by the semantics, we obtain (details omitted) that $(x, \theta^*)\models [\psi] \Box\sigma$.
\weg{
\begin{eqnarray}
(\forall\eta\in\cL_{PAL_\intr{}}) (x, \theta^*)\models [\psi][\eta]\sigma & \mbox{iff} & (\forall\eta\in\cL_{PAL_\intr{}})((x, \theta^*)\models \intr{(\psi)} \ \mbox{implies} \ (x, (\theta^*)^\psi)\models[\eta]\sigma) \nonumber\\
 & \mbox{iff} & (x, \theta^*)\models \intr{(\psi)} \ \mbox{implies} \ (\forall\eta\in\cL_{PAL_\intr{}})( (x, (\theta^*)^\psi)\models[\eta]\sigma) \nonumber\\
  & \mbox{iff} & (x, \theta^*)\models \intr{(\psi)} \ \mbox{implies} \ (x, (\theta^*)^\psi)\models \Box\sigma \nonumber\\
  & \mbox{iff} &   (x, \theta^*)\models [\psi] \Box\sigma \nonumber
\end{eqnarray}
}

\textbf{Case} $\varphi:= \Box \psi$ \ \ Again note that  for all $\eta\in\cL_{PAL_\intr{}}$, $[\eta]\psi\po \Box \psi$, as  $\Box \psi$ has one more $\Box$ than $[\eta]\psi$ (see Lemma \ref{lemma1}.\ref{lemma1.3} and  Lemma \ref{lemma1}.\ref{lemma1.4}).  \myComment{Therefore, since $H(\Box \psi)$, we have $(\forall \eta\in\cL_{PAL_\intr{}})( [\eta]\psi \in x \ \mbox{iff} \ (x, \theta^*)\models [\eta]\psi)$, i.e., 
\begin{equation}\label{eqn.1}
(\forall \eta\in\cL_{PAL_\intr{}}) ([\eta]\psi \in x) \ \mbox{iff} \ (\forall \eta\in\cL_{PAL_\intr{}}) (x, \theta^*)\models [\eta]\psi
\end{equation}}
 Therefore, we obtain
\[
\begin{array}{llllll}
\Box \psi \in x &  \mbox{iff}&(\forall \eta\in\cL_{PAL_\intr{}}) ([\eta]\psi \in x) & \mbox{Lemma \ref{lemma6}} \\
\ & \mbox{iff}&(\forall \eta\in\cL_{PAL_\intr{}}) (x, \theta^*)\models [\eta]\psi & \mbox{(IH)} \\
\ & \mbox{iff}& \ (x, \theta^*)\models \Box\psi  & \mbox{semantics} \\
\end{array}\] \end{proof}

\begin{theorem}
$APAL_\intr{}$ is complete with respect to the class of all topo-models.
\end{theorem}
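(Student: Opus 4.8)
The plan is to derive completeness from the Truth Lemma (Lemma \ref{xxx}) by the usual canonical-model argument, now carried out over maximally consistent \emph{theories} rather than maximally consistent sets. Since completeness asserts that $\models\varphi$ implies $\vdash\varphi$, I would argue the contrapositive: given a non-theorem $\varphi$, I would produce a point of the canonical model $\cX^c$ refuting $\varphi$, so that $\not\models\varphi$. Essentially all of the work has already been absorbed into the Truth Lemma and into Lindenbaum's Lemma for theories, so the theorem itself is a short assembly of facts already in hand.

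Concretely, suppose $\not\vdash\varphi$, i.e.\ $\varphi\notin APAL_\intr{}$, recalling that $APAL_\intr{}$ is the smallest theory and hence coincides with the set of theorems. First I would form the theory $APAL_\intr{}+\neg\varphi$: by Lemma \ref{lemma11z} this is a theory containing $\neg\varphi$, and by Lemma \ref{lemma11a} it is consistent precisely because $\neg\neg\varphi\notin APAL_\intr{}$, which holds since $\varphi\notin APAL_\intr{}$ and $\neg\neg\varphi\leftrightarrow\varphi$ is a theorem. Next, by Lindenbaum's Lemma (Lemma \ref{lindenbaum}) I would extend $APAL_\intr{}+\neg\varphi$ to a maximally consistent theory $x\in X^c$ with $\neg\varphi\in x$, so that $\varphi\notin x$ by Proposition \ref{equivalent.defn}.

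It then remains to read off the refutation from the canonical model. Here I would use that $\cX^c$ is a genuine topo-model: $\tau^c$ is the topology generated by the subbase $\Sigma$ (Definition \ref{canon1}), and $\Phi^c$ satisfies the conditions of Definition \ref{function}, as observed after Definition \ref{canon1}. Taking $U=X^c\in\tau^c$ gives $\theta^*=\theta^*|_{X^c}\in\Phi^c$ with $Dom(\theta^*)=X^c$, so $(x,\theta^*)$ is a neighbourhood situation in $\cX^c$. By the Truth Lemma (Lemma \ref{xxx}), $\varphi\notin x$ yields $\cX^c,(x,\theta^*)\not\models\varphi$, whence $\not\models\varphi$. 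Contraposing gives $\models\varphi\Rightarrow\vdash\varphi$.

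The hard part is not in this final assembly but in the ingredients it quotes: the Truth Lemma, whose $\Box\psi$ and $[\psi]\Box\sigma$ cases are what make the infinitary rule (DR5) interact correctly with the complexity order $<^S_d$, and Lindenbaum's Lemma for theories, which must respect closure under (DR5). The only points I would double-check at the level of this theorem are that the canonical structure really meets every clause of Definition \ref{function} (so that $\cX^c$ is admissible and the $S5$ behaviour of each $K_i$ is available) and that maximally consistent theories enjoy the Boolean closure properties of Proposition \ref{equivalent.defn}; both are already established above.
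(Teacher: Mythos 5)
Your proposal is correct and follows essentially the same route as the paper's own proof: take a non-theorem $\varphi$, form the consistent theory $APAL_\intr{}+\neg\varphi$ via Lemmas \ref{lemma11z} and \ref{lemma11a}, extend it by Lindenbaum's Lemma to a maximally consistent theory containing $\neg\varphi$, and refute $\varphi$ at the corresponding neighbourhood situation $(x,\theta^*)$ of the canonical model by the Truth Lemma. Your additional checks (that $\neg\neg\varphi\notin APAL_\intr{}$ justifies the appeal to Lemma \ref{lemma11a}, and that $(x,\theta^*)$ is a genuine neighbourhood situation in an admissible topo-model) are slightly more explicit than the paper's text but do not change the argument.
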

\begin{proof}{} 
Let $\varphi\in\cL_{APAL_\intr{}}$ such that $\not\vdash \varphi$, i.e., $\varphi\not\in APAL_\intr{}$  (Recall that $APAL_\intr{}$ is the smallest theory). Then, by  Lemma \ref{lemma11a},  $APAL_\intr{}+\neg\varphi$ is a consistent theory and,  by Lemma \ref{lemma11z}, $\neg\varphi\in APAL_\intr{}+\neg\varphi$. By Lemma \ref{lindenbaum}, the consistent theory $APAL_\intr{}+\neg\varphi$ can be extended to a maximally consistent theory $y$ such that $APAL_\intr{}+\neg\varphi\subseteq y$. Since $y$ is maximally consistent and $\neg\varphi\in y$, we obtain $\varphi\not\in y$ (by Proposition \ref{equivalent.defn}). Then, by Lemma \ref{xxx} (Truth Lemma), $\cX^c, (y, \theta^*)\not\models \varphi$.
\end{proof}

\section{Comparison to other work} \label{sec.comparison}

Multi-agent epistemic systems with subset space-like semantics have been proposed in \cite{heinemann08,heinemann10,baskent,agotnes13}, however, none of these are concerned with arbitrary announcements.  Our goal in this paper is not to provide a multi-agent generalization of SSL \emph{per se}, but to work with the \emph{effort-like} modality $\Box$ intended to capture the information change brought about by any announcements (subject to some restrictions) in a multi-agent setting and modelling it by way of ``open-set shrinking''  similar to the effort modality, rather than by deleting states or neighbourhoods, so that the intuitive link between the two becomes more transparent on a semantic level. In \cite{HvD-SSL}, Balbiani et al. proposed subset space semantics for arbitrary announcements, however, their approach does not go beyond the single-agent case and the semantics provided is in terms of model restriction.   An unorthodox approach to multi-agent knowledge is proposed in \cite{heinemann08,heinemann10}. Roughly speaking, instead of having a knowledge modality $K_i$ for each agent in his syntax, Heinemann uses additional operators to define $K_i$ and his semantics only validate the $S4$-axioms for $K_i$. The necessitation rule for $K_i$ does not preserve validity under the proposed semantics \cite{heinemann08,heinemann10}. In \cite{agotnes13} a multi-agent semantics for knowledge is provided, but no announcements or further generalizations (unlike in their other, single-agent, work \cite{wang13}), and not in a topological setting. Their use of partitions for each agent instead of a single neighbourhood is compatible with our requirement that all neighbourhoods for a given agent be disjoint. A further difference from the existing literature is that we restrict our attention to topological spaces and prove our results by means of topological tools. 

We applied the new completeness proof for arbitrary public announcement logic of \cite{HvD-simpleapal} to a topological setting. The canonical modal construction is as in \cite{bjorndahl} with some multi-agent modifications. The modality $\intr{}$  in our system demands a different complexity measure in the Truth Lemma of the completeness proof than in \cite{HvD-simpleapal}.

\section{Conclusions}
We have proposed topological semantics for the multi-agent extensions of the public announcement logic of \cite{bjorndahl}, and further extended the logic with arbitrary announcements. We showed topological completeness of these logics.  Our work can be seen as a  step toward discovering the interplay between dynamic epistemic logic and topological reasoning. 

For further research, we envisage a {\bf finitary} axiomatization for $APAL_{\intr{}}$ wherein the infinitary derivation rule (DR5) is replaced by a finitary rule. The obvious derivation rule would derive something after {\em any} announcement if it can be derived after announcing a fresh variable \cite{balbiani08}. Under subset space semantics, it is unclear how to prove that this rule is sound. 

We are still investigating expressivity and (un)decidability. If the logic $APAL_{\intr{}}$ is undecidable, this would contrast nicely with the undecidability of arbitrary public announcement logic. Otherwise, there may be interesting decidable versions when restricting the class of models to particular topologies.

The logic $APAL_{\intr{}}$ is also axiomatizable on the class where the $K$ modalities have $S4$ properties, a result we have not reported in this paper for consistency of presentation. This class is of topological interest.

In our setup all agents have the same observational powers. If agents can have different observational powers, we can associate a topology with each agent and generalize the logic to an arbitrary {\em epistemic action} logic. 

Furthermore, we would like to explore the exact difference between the effort modality and the arbitrary announcement modality (in the single agent case, see \cite{eumas})  by constructing a topological model which distinguishes the two: a topological model might have more than epistemically definable opens with respect to the proposed semantics. 

\section*{Acknowledgements}\label{sec:Acknowledgements}
We thank Philippe Balbiani for various detailed suggestions over the past year on how to improve our single-agent and multi-agent results in subset space logic and topological logics. We have found him very supportive of our efforts. We also thank the TARK reviewers for their valuable comments. Hans van Ditmarsch is also affiliated to IMSc (Institute of Mathematical Sciences), Chennai, as research associate. We acknowledge support from European Research Council grant EPS 313360.


\begin{thebibliography}{10}
\providecommand{\bibitemdeclare}[2]{}
\providecommand{\surnamestart}{}
\providecommand{\surnameend}{}
\providecommand{\urlprefix}{Available at }
\providecommand{\url}[1]{\texttt{#1}}
\providecommand{\href}[2]{\texttt{#2}}
\providecommand{\urlalt}[2]{\href{#1}{#2}}
\providecommand{\doi}[1]{doi:\urlalt{http://dx.doi.org/#1}{#1}}
\providecommand{\bibinfo}[2]{#2}

\bibitemdeclare{article}{balbiani08}
\bibitem{balbiani08}
\bibinfo{author}{Philippe \surnamestart Balbiani\surnameend},
  \bibinfo{author}{Alexandru \surnamestart Baltag\surnameend},
  \bibinfo{author}{Hans \surnamestart van Ditmarsch\surnameend},
  \bibinfo{author}{Andreas \surnamestart Herzig\surnameend},
  \bibinfo{author}{Tomohiro \surnamestart Hoshi\surnameend} \&
  \bibinfo{author}{Tiago \surnamestart de~Lima\surnameend}
  (\bibinfo{year}{2008}): \emph{\bibinfo{title}{`{K}nowable' as `known after an
  announcement'}}.
\newblock {\sl \bibinfo{journal}{The Review of Symbolic Logic}}
  \bibinfo{volume}{1}(\bibinfo{number}{3}), pp. \bibinfo{pages}{305--334}.
\newblock \urlprefix\url{http://dx.doi.org/10.1017/s1755020308080210}.

\bibitemdeclare{article}{HvD-simpleapal}
\bibitem{HvD-simpleapal}
\bibinfo{author}{Philippe \surnamestart Balbiani\surnameend} \&
  \bibinfo{author}{Hans \surnamestart van Ditmarsch\surnameend}
  (\bibinfo{year}{2015}): \emph{\bibinfo{title}{A simple proof of the
  completeness of {APAL}}}.
\newblock {\sl \bibinfo{journal}{Studies in Logic}} \bibinfo{volume}{8 (1)},
  pp. \bibinfo{pages}{65--78}.
\newblock \urlprefix\url{http://dblp.org/rec/journals/corr/BalbianiD14}.

\bibitemdeclare{inproceedings}{HvD-SSL}
\bibitem{HvD-SSL}
\bibinfo{author}{Philippe \surnamestart Balbiani\surnameend},
  \bibinfo{author}{Hans \surnamestart van Ditmarsch\surnameend} \&
  \bibinfo{author}{Andrey \surnamestart Kudinov\surnameend}
  (\bibinfo{year}{2013}): \emph{\bibinfo{title}{Subset Space Logic with
  Arbitrary Announcements}}.
\newblock In: {\sl \bibinfo{booktitle}{Proc. of the 5th ICLA}},
  \bibinfo{publisher}{Springer}, pp. \bibinfo{pages}{233--244}.
\newblock \urlprefix\url{http://dx.doi.org/10.1007/978-3-642-36039-8_21}.

\bibitemdeclare{mastersthesis}{baskent}
\bibitem{baskent}
\bibinfo{author}{Can \surnamestart Baskent\surnameend} (\bibinfo{year}{2007}):
  \emph{\bibinfo{title}{Topics in Subset Space Logic}}.
\newblock Master's thesis, \bibinfo{address}{University of Amsterdam}.
\newblock
  \urlprefix\url{https://www.illc.uva.nl/Research/Publications/Reports/MoL-2007-05.text.pdf}.

\bibitemdeclare{article}{baskent12}
\bibitem{baskent12}
\bibinfo{author}{Can \surnamestart Baskent\surnameend} (\bibinfo{year}{2012}):
  \emph{\bibinfo{title}{Public Announcement Logic in Geometric Frameworks}}.
\newblock {\sl \bibinfo{journal}{Fundam. Inform.}}
  \bibinfo{volume}{118}(\bibinfo{number}{3}), pp. \bibinfo{pages}{207--223}.
\newblock \urlprefix\url{http://dx.doi.org/10.3233/FI-2012-710}.

\bibitemdeclare{article}{bjorndahl}
\bibitem{bjorndahl}
\bibinfo{author}{Adam \surnamestart Bjorndahl\surnameend}
  (\bibinfo{year}{2013}): \emph{\bibinfo{title}{Subset Space Public
  Announcement Logic Revisited}}.
\newblock {\sl \bibinfo{journal}{CoRR}} \bibinfo{volume}{abs/1302.4009}.
\newblock \urlprefix\url{http://dblp.org/rec/journals/corr/abs-1302-4009}.

\bibitemdeclare{book}{Chellas}
\bibitem{Chellas}
\bibinfo{author}{Brian~F. \surnamestart Chellas\surnameend}
  (\bibinfo{year}{1980}): \emph{\bibinfo{title}{Modal logic}}.
\newblock \bibinfo{publisher}{Cambridge University Press},
  \bibinfo{address}{Cambridge}.
\newblock \urlprefix\url{http://dx.doi.org/10.1017/cbo9780511621192}.

\bibitemdeclare{article}{parikh96}
\bibitem{parikh96}
\bibinfo{author}{Andrew \surnamestart Dabrowski\surnameend},
  \bibinfo{author}{Lawrence~S. \surnamestart Moss\surnameend} \&
  \bibinfo{author}{Rohit \surnamestart Parikh\surnameend}
  (\bibinfo{year}{1996}): \emph{\bibinfo{title}{Topological Reasoning and the
  Logic of Knowledge}}.
\newblock {\sl \bibinfo{journal}{Ann. Pure Appl. Logic}}
  \bibinfo{volume}{78}(\bibinfo{number}{1-3}), pp. \bibinfo{pages}{73--110}.
\newblock \urlprefix\url{http://dx.doi.org/10.1016/0168-0072(95)00016-X}.

\bibitemdeclare{book}{hvdetal.del:2007}
\bibitem{hvdetal.del:2007}
\bibinfo{author}{Hans \surnamestart van Ditmarsch\surnameend},
  \bibinfo{author}{Wiebe \surnamestart van~der Hoek\surnameend} \&
  \bibinfo{author}{Barteld \surnamestart Kooi\surnameend}
  (\bibinfo{year}{2007}): \emph{\bibinfo{title}{Dynamic Epistemic Logic}}.
\newblock {\sl \bibinfo{series}{Synthese Library}} \bibinfo{volume}{337},
  \bibinfo{publisher}{Springer}.
\newblock \urlprefix\url{http://dx.doi.org/10.1007/978-1-4020-5839-4}.

\bibitemdeclare{inproceedings}{eumas}
\bibitem{eumas}
\bibinfo{author}{Hans \surnamestart van Ditmarsch\surnameend},
  \bibinfo{author}{Sophia \surnamestart Knight\surnameend} \&
  \bibinfo{author}{Ayb{\"{u}}ke \surnamestart {\"{O}}zg{\"{u}}n\surnameend}
  (\bibinfo{year}{2014}): \emph{\bibinfo{title}{Arbitrary Announcements on
  Topological Subset Spaces}}.
\newblock In: {\sl \bibinfo{booktitle}{Proc. of the 12th EUMAS}},
  \bibinfo{publisher}{Springer}, pp. \bibinfo{pages}{252--266}.
\newblock \urlprefix\url{http://dx.doi.org/10.1007/978-3-319-17130-2_17}.

\bibitemdeclare{book}{dugundji}
\bibitem{dugundji}
\bibinfo{author}{James \surnamestart Dugundji\surnameend}
  (\bibinfo{year}{1965}): \emph{\bibinfo{title}{Topology}}.
\newblock \bibinfo{series}{Allyn and Bacon Series in Advanced Mathematics},
  \bibinfo{publisher}{Prentice Hall}.
\newblock
  \urlprefix\url{http://gen.lib.rus.ec/book/index.php?md5=9C6AF34E0F025B09B84A7FA26FBE8EDA}.

\bibitemdeclare{book}{goldblatt:1982}
\bibitem{goldblatt:1982}
\bibinfo{author}{Robert \surnamestart Goldblatt\surnameend}
  (\bibinfo{year}{1982}): \emph{\bibinfo{title}{Axiomatising the Logic of
  Computer Programming}}.
\newblock \bibinfo{publisher}{Springer-Verlag}.
\newblock \urlprefix\url{http://dx.doi.org/10.1007/bfb0022481}.

\bibitemdeclare{inproceedings}{heinemann08}
\bibitem{heinemann08}
\bibinfo{author}{Bernhard \surnamestart Heinemann\surnameend}
  (\bibinfo{year}{2008}): \emph{\bibinfo{title}{Topology and Knowledge of
  Multiple Agents}}.
\newblock In: {\sl \bibinfo{booktitle}{Proc. of the 11th IBERAMIA}},
  \bibinfo{publisher}{Springer}, pp. \bibinfo{pages}{1--10}.
\newblock \urlprefix\url{http://dx.doi.org/10.1007/978-3-540-88309-8_1}.

\bibitemdeclare{article}{heinemann10}
\bibitem{heinemann10}
\bibinfo{author}{Bernhard \surnamestart Heinemann\surnameend}
  (\bibinfo{year}{2010}): \emph{\bibinfo{title}{Logics for multi-subset
  spaces}}.
\newblock {\sl \bibinfo{journal}{Journal of Applied Non-Classical Logics}}
  \bibinfo{volume}{20}(\bibinfo{number}{3}), pp. \bibinfo{pages}{219--240}.
\newblock \urlprefix\url{http://dx.doi.org/10.3166/jancl.20.219-240}.

\bibitemdeclare{inproceedings}{moss92}
\bibitem{moss92}
\bibinfo{author}{Lawrence~S. \surnamestart Moss\surnameend} \&
  \bibinfo{author}{Rohit \surnamestart Parikh\surnameend}
  (\bibinfo{year}{1992}): \emph{\bibinfo{title}{Topological Reasoning and The
  Logic of Knowledge}}.
\newblock In: {\sl \bibinfo{booktitle}{Proc. of the 4th TARK}},
  \bibinfo{publisher}{Morgan Kaufmann}, pp. \bibinfo{pages}{95--105}.
\newblock \urlprefix\url{http://dblp.org/rec/html/conf/tark/MossP92}.

\bibitemdeclare{incollection}{moss2}
\bibitem{moss2}
\bibinfo{author}{Rohit \surnamestart Parikh\surnameend},
  \bibinfo{author}{Lawrence \surnamestart Moss\surnameend} \&
  \bibinfo{author}{Chris \surnamestart Steinsvold\surnameend}
  (\bibinfo{year}{2007}): \emph{\bibinfo{title}{Topology and Epistemic Logic}}.
\newblock In: {\sl \bibinfo{booktitle}{Handbook of Spatial Logics}}, pp.
  \bibinfo{pages}{299--341}.
\newblock \urlprefix\url{http://dx.doi.org/10.1007/978-1-4020-5587-4_6}.

\bibitemdeclare{inproceedings}{plaza:1989}
\bibitem{plaza:1989}
\bibinfo{author}{Jan~A. \surnamestart Plaza\surnameend} (\bibinfo{year}{1989}):
  \emph{\bibinfo{title}{Logics of Public Communications}}.
\newblock In: {\sl \bibinfo{booktitle}{Proc.\ of the 4th ISMIS}},
  \bibinfo{publisher}{Oak Ridge National Laboratory}, pp.
  \bibinfo{pages}{201--216}.
\newblock \urlprefix\url{http://dx.doi.org/10.1007/s11229-007-9168-7}.

\bibitemdeclare{inproceedings}{agotnes13}
\bibitem{agotnes13}
\bibinfo{author}{Yi~N. \surnamestart Wang\surnameend} \&
  \bibinfo{author}{Thomas \surnamestart {\AA}gotnes\surnameend}
  (\bibinfo{year}{2013}): \emph{\bibinfo{title}{Multi-Agent Subset Space
  Logic}}.
\newblock In: {\sl \bibinfo{booktitle}{Proc. of the 23rd IJCAI}},
  \bibinfo{publisher}{{IJCAI/AAAI}}, pp. \bibinfo{pages}{1155--1161}.
\newblock \urlprefix\url{http://dblp.org/rec/conf/ijcai/WangA13a}.

\bibitemdeclare{inproceedings}{wang13}
\bibitem{wang13}
\bibinfo{author}{Y\`{\i}~N. \surnamestart W{\'a}ng\surnameend} \&
  \bibinfo{author}{Thomas \surnamestart {\AA}gotnes\surnameend}
  (\bibinfo{year}{2013}): \emph{\bibinfo{title}{Subset Space Public
  Announcement Logic}}.
\newblock In: {\sl \bibinfo{booktitle}{Proc. of 5th ICLA}},
  \bibinfo{publisher}{Springer}, pp. \bibinfo{pages}{245--257}.
\newblock \urlprefix\url{http://dx.doi.org/10.1007/978-3-642-36039-8_22}.

\end{thebibliography}

\end{document}